\newcommand{\Tra}{{\sf T}}
\definecolor{green}{RGB}{20, 150, 60}
\begin{document}

\newcommand{\TODO}[1]{\textcolor{blue}{\textbf{TODO XXXXX: #1}}}
\newcommand{\eqnref}[1]{Eqn.~\eqref{#1}}
\newcommand{\figref}[1]{Figure~\ref{#1}}
\renewcommand{\cond}[1]{\phi\left( #1 \right)}
\newcommand{\condthree}[1]{\phi_3\left( #1 \right)}
\newcommand{\diag}[1]{\text{diag}\left( #1 \right)}
\newcommand{\cut}[1]{\text{cut}\left(#1\right)}
\newcommand{\vol}[1]{\text{vol}(#1)}
\newcommand{\volthree}[1]{\text{vol}_3(#1)}
\newcommand{\cutthree}[1]{\text{cut}_3(#1)}
\newcommand{\mat}[1]{\bm{#1}}
\newcommand{\tens}[1]{\bm{#1}}
\renewcommand{\tens}[1]{\underline{\bm{#1}}}
\newcommand{\vect}[1]{\bm{#1}}
\newcommand{\allones}{\vect{e}}
\newcommand{\Flow}[1]{F\left(#1\right)}
\newcommand{\ncut}[1]{\text{ncut}\left(#1\right)}
\newcommand{\sign}[1]{\text{sign}(#1)}
\newcommand{\tenstrans}{\mat{P}[\vect{x}]}
\newcommand{\dataset}[1]{\texttt{#1}}
\newcommand{\hide}[1]{}
\newcommand{\tsc}{{\sc{TSC}}\xspace}
\newcommand{\eg}{\emph{e.g.}}
\newcommand{\ie}{\emph{i.e.}}

\graphicspath{{./FIG/}}

\title{\Large Tensor Spectral Clustering for Partitioning Higher-order Network Structures}
\author{
Austin R. Benson\thanks{
Institute for Computational and Mathematical Engineering, Stanford University.}
\and 
David F. Gleich\thanks{Department of Computer Science, Purdue University.}
\and
Jure Leskovec\thanks{Department of Computer Science, Stanford University.}
}
\date{}

\maketitle


\begin{abstract} 

Spectral graph theory-based methods represent an important class of tools for studying the structure of networks. 
Spectral methods are based on a first-order Markov chain derived from a random walk on the graph and thus they cannot take advantage of important higher-order network substructures such as triangles, cycles, and feed-forward loops.
Here we propose a {\em Tensor Spectral Clustering} (\tsc) algorithm that allows for modeling higher-order network structures in a graph partitioning framework. 
Our \tsc algorithm allows the user to specify which higher-order network structures (cycles, feed-forward loops, etc.)~should be preserved by the network clustering.
Higher-order network structures of interest are represented using a tensor, which we then partition 
by developing a multilinear spectral method.
Our framework can be applied to discovering layered flows in networks as well as graph anomaly detection,
which we illustrate on synthetic networks.
In directed networks, a higher-order structure of particular interest is the directed 3-cycle, which captures feedback loops in networks.
We demonstrate that our \tsc algorithm produces large partitions that cut fewer directed 3-cycles than standard spectral clustering algorithms.

\hide{
Traditional spectral graph partitioning uses a first-order Markov chain derived from random walks on the graph.
Because the Markov chain is first-order, these spectral algorithms cannot take advantage of important higher-order network structures such as triangles and cycles.
We propose a tensor spectral partitioning algorithm that uses tensors of higher-order network data and approximations to higher-order Markov chains to directly partition higher-order structures.
In directed networks, one higher-order structure of particular interest is the directed 3-cycle, which arises in directed community detection and as a motif for network feedback.
We empirically demonstrate that our multilinear clustering algorithm produces partitions that do not cut many directed 3-cycles compared to standard spectral clustering algorithms.
We compliment our empirical findings with theoretical justification and illustrative examples on small networks.
}

\end{abstract}


\section{Introduction}
\label{sec:010intro}

Spectral graph methods investigate the structure of networks by studying the eigenvalues and eigenvectors of matrices associated to the graph, such as its adjacency matrix or Laplacian matrix. Arguably the most important spectral graph algorithms are the spectral graph partitioning methods that identify partitions of nodes into low conductance communities in undirected networks~\cite{alon1985lambda}. 
While the simple matrix computations and strong mathematical theory behind spectral clustering methods
makes them appealing, the methods are inherently limited to {\em two-dimensional} structures, for example, undirected edges connecting {\em pairs} nodes. 
Thus, it is a natural question whether spectral methods can be generalized to higher-order network structures. For example, traditional spectral clustering attempts to minimize (appropriately normalized) number of first-order structures (\ie, edges) that need to be cut in order to split the graph into two parts. In a similar spirit, a higher-order generalization of spectral clustering would try to minimize cutting {\em higher-order structures} that involve multiple nodes (\eg, triangles).


Incorporating higher-order graph information (that is, network motifs/graphlets) into the partitioning process can significantly improve our understanding of the underlying network.
For example, triangles (three-dimensional network structures involving \emph{three} nodes) have proven fundamental to understanding social networks~\cite{granovetter1973strength,kossinets2006empirical} and their community structure~\cite{durak2012degree,prat2012shaping,rosvall2014memory}.
Most importantly, higher-order spectral clustering would allow for greater modeling flexibility as the application would drive which higher-order network structures should be preserved by the network clustering. For example, in financial networks, directed cycles might indicate money laundering and
higher-order spectral clustering could be used to identify groups of nodes that participate in such directed cycles. As directed cycles involve multiple edges, current spectral clustering tools would not be able to identify groups with such structural signatures.


Generalizing spectral clustering to higher-order structures involves 
several challenges. The essential challenge is that higher-order 
structures are often encoded in tensors, 
i.e., multi-dimensional matrices. Even simple computations with tensors
lack the traditional algorithmic guarantees of two-dimensional matrix
computations such as existence and known runtimes. For instance, 
eigenvectors are a key component to spectral clustering, and finding 
tensor eigenvectors is NP-hard~\cite{hillar2013most}.
An additional challenge is 
that the number of higher-order structures increases exponentially 
with the size of the structure. For example, in a graph with n nodes, 
the number of possible triangles is $O(n^3)$.
However, real-world networks have far fewer triangles.


While there exist several extensions to the spectral method, including the directed Laplacian~\cite{chung2005laplacians}, the asymmetric Laplacian~\cite{boley2011commute}, and co-clustering~\cite{dhillon2001co,rohe2012co}, these methods are all limited to two-dimensional graph representations. A simple work-around would be to weight edges that occur in higher-order structures~\cite{klymko2014using}. However, this heuristic is unsatisfactory because the optimization is still on edges,
and not on the higher-order patterns we aim to cluster.


Here, we propose a {\em Tensor Spectral Clustering (\tsc)} framework that is directly based on higher-order network structures, \ie, network information beyond edges connecting two nodes.
Our framework operates on a tensor of network data and allows the user to specify which higher-order network structures (cycles, feed-forward loops, etc.)~should be preserved by the clustering. For example, if  one aims to obtain a partitioning that does not cut triangles, then this can be encoded in a third-order tensor $\tens{T}$, where $\tens{T}(i, j, k)$ is equal to $1$ if nodes $i$, $j$, and $k$ form a triangle and $0$ otherwise. 

Given a tensor representation of the desired higher-order network structures, we then use a mutlilinear {PageRank} vector~\cite{gleich2014multilinear} to reduce the tensor to a two-dimensional matrix. This dimensionality reduction step allows us to use efficient matrix algorithms while approximately preserving the higher-order structures represented by the tensor. Our resulting \tsc algorithm is a spectral method that partitions the network to minimize the number of higher-order structures cut. This way our algorithm finds subgraphs that contain many instances of the higher-order structure described by the tensor. 
Figure~\ref{fig:simple_example} illustrates a directed network, and our goal is to identify clusters of directed 3-cycles. That is, we aim to partition the nodes into two sets such that few directed 3-cycles get cut. Our \tsc algorithm finds a partition that does not cut any of the directed 3-cycles, while a standard spectral partitioner (the directed Laplacian~\cite{chung2005laplacians})~does.


Clustering networks based on higher-order structures has many applications. For example, the \tsc algorithm allows for identifying layered flows in networks, where the network consists of several layers that contain many feedback loops. Between layers, there are many edges, but they flow in one direction and do not contribute to feedback. We identify such layers by clustering a tensor that describes small feedback loops (\emph{e.g.}, directed 3-cycles and reciprocated edges). Similarly, \tsc can be applied to anomaly detection in directed networks, where the tensor encodes directed 3-cycles that have no reciprocated edges. Our \tsc algorithm can find subgraphs that have many instances of this pattern, while other spectral methods fail to capture these higher-order network structures.

Our contributions are summarized as follows:
\begin{itemize}[noitemsep,nolistsep]
\item
In Sec.~\ref{sec:030framework}, we develop a tensor spectral clustering framework that computes directly on higher-order graph structures.
We provide theoretical justifications for our framework in Sec.~\ref{sec:040generalizations}.
\item
In Sec.~\ref{sec:050applications}, we provide two applications---layered flow networks and anomaly detection---where
our tensor spectral clustering algorithm outperforms standard spectral clustering on small, illustrative networks.
\item
In Sec.~\ref{sec:060d3c_large}, we use tensor spectral clustering to partition large networks so that directed 3-cycles are not cut.
This provides additional empirical evidence that our algorithm out-performs state-of-the-art spectral methods.
\end{itemize}

Code used for this paper is available at \url{https://github.com/arbenson/tensor-sc}, and all networks used in experiments are available from SNAP~\cite{snapnets}.

\begin{figure}
\begin{minipage}{\textwidth}
\begin{minipage}{0.24\textwidth}
\tikzset{first node/.style={circle,fill=blue!20,draw,minimum size=0.6cm,inner sep=0pt},}
\tikzset{second node/.style={circle,fill=red!20,draw,minimum size=0.6cm,inner sep=0pt},}
\scalebox{0.7}{
 \begin{tikzpicture}
    \node[first node] (0) {$0$};
    \node[first node] (1) [below right = 0.25cm and 3cm of 0]  {$1$};
    \node[first node] (2) [below right = 0.75cm and 0.5cm of 0] {$2$};
    \node[second node] (3) [below = 0.5cm of 2] {$3$};
    \node[second node] (4) [below right = 0.25cm and 2cm of 3]  {$4$};
    \node[second node] (5) [below left = 0.75cm and 0.1cm of 3] {$5$};    

    \path[draw, ultra thick] (0) edge [->] (1);
    \path[draw, ultra thick] (1) edge [->] (2);
    \path[draw, ultra thick] (2) edge [->] (0);
    \path[draw, ultra thick] (1) edge [->] (0);
    \path[draw, ultra thick] (2) edge [->] (1);
    \path[draw, ultra thick] (0) edge [->] (2);    
    
    \path[draw, ultra thick] (3) edge [->] (4);
    \path[draw, ultra thick] (4) edge [->] (5);
    \path[draw, ultra thick] (5) edge [->] (3);    
    
    \path[draw, ultra thick] (2) edge [->] (3);
    \path[draw, ultra thick] (2) edge [->] (4);
    \path[draw, ultra thick] (1) edge [->] (4);    
    
    \path[draw, ultra thick] (5) edge [->] (0);        

\end{tikzpicture}
}
\end{minipage}
\begin{minipage}{0.3\textwidth}
\begin{tabular}{l}
Tensor spectral clustering: \\
\{\textcolor{blue}{0}, \textcolor{blue}{1},  \textcolor{blue}{2}\},
\{\textcolor{red}{3}, \textcolor{red}{4},  \textcolor{red}{5}\} \\
\\
Directed Laplacian: \\
\{\textcolor{blue}{1}, \textcolor{blue}{2},  \textcolor{red}{5}\},
\{\textcolor{blue}{0}, \textcolor{red}{3},  \textcolor{red}{4}\}
\end{tabular}
\end{minipage}
\end{minipage}
\caption{
(Left)
Network where directed 3-cycles only appear within the blue or red nodes.
(Right)
Partitioning found by our proposed tensor spectral clustering algorithm
and the directed Laplacian.
Our proposed algorithm doesn't cut any directed 3-cycles.
Directed 3-cycles are just one higher-order structure that can be used within our framework.
}
\label{fig:simple_example}
\end{figure}

\section{Preliminaries and background}
\label{sec:020preliminaries}
We now review spectral clustering and conductance cut.
The key ideas are a Markov chain representing a random walk on a graphs,
a second left eigenvector of the Markov chain,
and a sweep cut that uses the ordering of the eigenvector to compute conductance scores.
In Sec.~\ref{sec:030framework}, we generalize these ideas to tensors and higher-order structures on graphs.

\subsection{Notation and the transition matrix}
\label{sec:021transition_matrix}
Consider an undirected, weighted graph $G = (V, E)$, where $n = |V|$ and $m = |E|$.
Let $\mat{A} \in \mathbb{R}^{n \times n}_+$ be the weighted adjacency matrix of $G$, \emph{i.e.}, $A_{ij} = w_{ij}$ if $(i, j) \in E$ and $A_{ij} = 0$ otherwise.
Let $\mat{D}$ be the diagonal matrix with generalized degrees of the vertices of $G$.
In other words, $\mat{D} = \diag{\mat{A}\allones}$, where $\allones$ is the vector of all ones.
The \emph{combinatorial Laplacian} or \emph{Kirchoff} matrix is $\mat{K} = \mat{D} - \mat{A}$.
The matrix $\mat{P} = \mat{A}^{\Tra}\mat{D}^{-1}$ is a column stochastic matrix,
which we call the \emph{transition matrix}.
We now interpret this matrix as a Markov chain.

\subsection{Markov chain interpretation}
\label{sec:022markov}
Since $\mat{P}$ is column stochastic,
we can interpret the matrix as a Markov chain with states $S_t$, for each time step $t$.
Specifically, the states of the Markov chain are the vertices on the graph, \ie, $S_t \in V$.
The transition probabilities are given by $\mat{P}$: 
\[
\text{Prob}(S_{t+1} = i \;\mid\; S_{t} = j) = \mat{P}_{ij} = A_{ji} / D_{jj}.
\]

This Markov chain represents a \emph{random walk} on the graph $G$.
In Sec.~\ref{sec:032markov}, we will generalize this idea to tensors of graph data.
We now show how the second left eigenvector of the Markov chain described here is key to spectral clustering.

\subsection{Second left eigenvector for conductance cut}
\label{sec:023evec_conductance}

The \emph{conductance} of a set $S \subset V$ of nodes is
\begin{equation}\label{eqn:conductance}
\cond{S} = \cut{S} / \min\left(\vol{S}, \vol{\bar{S}} \right),
\end{equation}
where $\cut{S} = \left\vert \{ (u, v) \;\mid\; u \in S, v \in \bar{S} \} \right\vert$,
and $\vol{S} = \left\vert \{ (u, v) \;\mid\; u \in S \} \right\vert$.
Small conductance indicates a good partition of the graph: the number of cut edges must be small and neither $S$ nor $\bar{S}$ can be too small.
Let $\vect{z} \in \{ -1, 1 \}^n$ be an indicator vector over the nodes in $G$, where $z_i = 1$ if the $i$th node is in $S$.
Then
\begin{equation}\label{eqn:cut}
\vect{z}^{\Tra}\mat{K}\vect{z} = \sum_{(i, j) \in E} 4\mathbb{I}\left(z_i = z_j\right) \propto \cut{S}.
\end{equation}

The conductance cut eigenvalue problem is an approximation for the NP-hard problem of minimizing conductance:
\begin{equation}\label{eqn:conductance_cut}
\begin{aligned}
& \underset{\vect{z} \in \mathbb{R}^n}{\text{minimize}}
& & \vect{z}^{\Tra}\mat{K}\vect{z} / \vect{z}^{\Tra}\mat{D}\vect{z} \\
& \text{subject to}
& & \allones^{\Tra}\mat{D}\vect{z} = 0, \quad \| \vect{z} \| = 1
\end{aligned}
\end{equation}

The idea of the real-valued relaxation in \eqnref{eqn:conductance_cut} is that positive and negative values of $\vect{z}$ correspond to the $\pm 1$ indicator vector for the cut in \eqnref{eqn:cut}.
In Sec.~\ref{sec:024sweep} we will review how to convert the real-valued solution to a cut.

The matrices $\mat{K}$ and $\mat{D}$ are positive semi-definite, and \eqnref{eqn:conductance_cut} is a generalized eigenvalue problem.
In particular, the solution is the vector $\vect{z}$ such that $\mat{K}\vect{z} = \lambda \mat{D}\vect{z}$, where $\lambda$ is the second smallest generalized eigenvalue (the smallest eigenvalue is $0$ and corresponds to the trivial solution $\vect{z} = \allones$).
To get the solution $\vect{z}$, we observe that
\begin{eqnarray*}
\mat{K}\vect{z} = \lambda \mat{D}\vect{z}
&\iff&  (\mat{I} - \mat{D}^{-1}\mat{A})\vect{z} = \lambda \vect{z} \\
&\iff& \vect{z}^T\mat{P} = (1 - \lambda)\vect{z}^T
\end{eqnarray*}
where $1 - \lambda$ is the second largest left eigenvalue of $\mat{P}$.
We know that $\allones^T\mat{P} = \allones^T$, so we are looking for the dominant left eigenvector that is orthogonal to the trivial one.

Here, we call the above partitioning algorithm for undirected graphs the ``undirected Laplacian'' method.
One generalization to directed graphs is due to Chung~\cite{chung2005laplacians}.
For this method, we use the undirected Laplacian method on the following symmetrized network:
$\mat{A}_{sym} := \frac{1}{2}\left(\mat{\Pi} \mat{P}^{\Tra} + \mat{P} \mat{\Pi}\right)$,
where $\mat{P} = \mat{A}^{\Tra}\mat{D}^{-1}$ and $\mat{\Pi} = \diag{\pi}$ for $\mat{P}\pi = \pi$, the stationary distribution of $\mat{P}$.
Note that $\mat{D}_{sym} = \diag{\mat{A}_{sym}\allones} = \mat{\Pi}$,
so we are interested in the second left eigenvector of
\begin{equation}\label{eqn:P_sym}
\mat{P}_{sym} = \frac{1}{2}\left(\mat{\Pi}\mat{P}^{\Tra}\mat{\Pi}^{-1} + \mat{P}\right).
\end{equation}
By ``directed Laplacian", we refer to the method that uses the second left eigenvector of $\mat{P}_{sym}$.

\subsection{Sweep cuts}
\label{sec:024sweep}
In order to round the real-valued solution $\vect{z}$ to a solution set $S$ to evaluate \eqnref{eqn:conductance},
we sort the vertices by the values $z_i$ and consider vertex sets $S_k$ that consist of the first $k$ nodes in the sorted vertex list.
In other words, if $\sigma_i$ is equal to the index of the $i$th smallest element of $\vect{z}$,
then $S_k = \{\sigma_1, \sigma_2, \ldots \sigma_k \}$.
We then choose $S = \arg\min_{S_k} \phi(S_k)$.
The set of nodes $S$ satisfies the celebrated Cheeger inequality~\cite{alon1985lambda}: $\phi_*^2 / 2 \le \phi(S) \le 2\phi_*$, where $\phi_*$ is the minimum conductance over all cuts.
The sweep cut computation is fast, since $S_{k+1}$ differs from $S_k$ by only one node,
and the sequence of scores $\phi(S_{1}), \ldots, \phi(S_{n})$ can be computed in $O(n + m)$ time.

In addition to conductance, other scores can also be computed in the same sweeping fashion.
Of particular interest are the normalized cut,
$ncut(S) = \cut{S}\left(1 / \vol{S} + 1 / \vol{\bar{S}}\right)$, 
and the expansion, $\rho(S) = \cut{S} / \min(|S|, |\bar{S}|)$.
The normalized cut differs by at most a factor of two from conductance,
so we will limit ourselves to conductance and expansion in this paper.

\section{Tensor spectral clustering framework}
\label{sec:030framework}

The key ingredients for spectral clustering discussed in Sec.~\ref{sec:020preliminaries} were
a transition matrix from an undirected graph,
a Markov chain interpretation of the transition matrix,
and the second left eigenvector of the Markov chain.
We now generalize these ideas for higher-order network structures.

\subsection{Transition tensors}
\label{sec:031transition_tensors}

Our first goal is to represent the higher-order network stuctures of interest. For example, to represent structures on three nodes (\ie, directed cycles, or feed-forward loops) we required a three-dimensional tensor. 
In particular, we want a \emph{symmetric} order-$3$ tensor $\tens{T} \in \mathbb{R}^{n \times n \times n}_+$
such that the entry at index $(i, j, k)$ contains information about nodes $i$, $j$, $k \in V$.
(Here, symmetric means that the value of $\tens{T}(i, j, k)$ remains the same under any permutation of the three indices.)
A tensor describing triangles in $G$ is:
\begin{equation}\label{eqn:tri_tens}
\tens{T}(i, j, k) = \mathbb{I}\left(i, j, k \in V \text{ distinct and form a triangle}\right).
\end{equation}

This tensor represents third-order information about the graph.
We form a transition tensor by
\[
\tens{P}(i, j, k) = \tens{T}(i, j, k) / \sum_{i=1}^{n}\tens{T}(i, j, k), \quad 1 \le i, j, k \le n.
\]
In the case that  $\sum_{i=1}^{n}\tens{T}(i, j, k) = 0$, we fill in $\tens{P}(:, j, k)$ with a stochastic vector $\vect{u}$, \ie, $\tens{P}(:, j, k) = \vect{u}$.
We call the vector $\vect{u}$ the \emph{dangling distribution vector}, borrowing the term from the PageRank community \cite{boldi2008traps}.
Next, we see how to interpret this transition tensor as a second-order Markov chain.

\subsection{Second-order Markov chains and the spacey random surfer}
\label{sec:032markov}

Next, we seek to generalize the Markov chain interpretation of spectral clustering to tensors. While spectral clustering on matrices is analogous to a first-order Markov chain, we will show that tensor spectral clustering is analogous to a second-order Markov chain on a matrix representation of the tensor.

Entries of the transition tensor $\tens{P}$ from Sec.~\ref{sec:031transition_tensors} can be interpreted as
the transition probabilities of a second-order Markov chain. Specifically, given a second-order Markov chain with state space the set of vertices, $V$, we define the transition probabilities as
\[
\tens{P}(i, j, k) = \text{Prob}\left(S_{t+1} = i \;\mid\; S_{t} = j, S_{t-1} = k\right).
\]
In other words, the probability of moving to state $i$ depends on the current state $j$ and the last state $k$. 
For the triangle tensor in \eqnref{eqn:tri_tens},
\[
\tens{P}(i, j, k) = \frac{\mathbb{I}\left(i, j, k \text{ form triangle}\right)}{\#(\text{triangles involving nodes $j$ and $k$})}
\]
If the previous state was node $k$ and the current state is node $j$, then, for the next state, the Markov chain chooses uniformly over all nodes $i$ that form a triangle with $j$ and $k$.

The stationary distribution $X_{ij}$ of the second-order Markov chain satisfies
$\sum_{k}\tens{P}(i, j, k)X_{jk} = X_{ij}$.
We would like to model the full second-order dynamics of the Markov chain, but doing so is computationally infeasible because just storing the stationary distribution requires $O(n^2)$ memory.
Instead, we will make the simplifying assumption that $X_{ij} = x_{i}x_{j}$ for some vector $\vect{x} \in \mathbb{R}_+^n$ with $\sum_{i} x_i = 1$.
The stationary distribution then satisfies
\begin{equation}\label{eqn:z_eigenpair}
\sum_{1 \le j, k \le n}\tens{P}(i, j, k)x_jx_k = x_i.
\end{equation}

With respect to \eqnref{eqn:z_eigenpair}, $\vect{x}$ is called a $z$ eigenvector of the tensor $\tens{P}$ with eigenvalue $1$~\cite{qi2005eigenvalues}.
To simplify notation, we will denote the one-mode unfolding of $\tens{P}$ by $\mat{R} \in \mathbb{R}^{n \times n^2}$, namely
$\mat{R} = \begin{bmatrix} \tens{P}(:, :, 1) & \tens{P}(:, :, 2) & \ldots & \tens{P}(:, :, n) \end{bmatrix}$.
The matrix $\mat{R}$ is a column stochastic matrix.
We use $\mat{R}_k = \tens{P}(:, :, k)$ to denote the $k$th $n \times n$ block of $\mat{R}$.
With this notation, \eqnref{eqn:z_eigenpair} reduces to $\mat{R} \cdot \left(\vect{x} \otimes \vect{x}\right) = \vect{x}$, where $\otimes$ denotes the Kronecker product.

The simplifying approximation $X_{ij} = x_ix_j$ is computationally and algebraically appealing, but we also want a random process to interpret the vector. Recent work~\cite{gleich2014multilinear} has considered the \emph{multilinear pagerank} vector $\vect{x}$ that satisfies
\begin{equation}\label{eqn:srs}
\alpha\mat{R}\left( \vect{x} \otimes \vect{x} \right) + (1 - \alpha)\vect{v} = \vect{x}, \; x_k \ge 0, \; \allones^{\Tra}\vect{x} = 1,
\end{equation}
for a constant $\alpha \in (0, 1)$ and stochastic vector $\vect{v}$.

This vector is the stationary distribution of a stochastic process recently termed the \emph{spacey random surfer}~\cite{gleich2014spacey}. At any step of the process, a random surfer has just moved from node $k$ to node $j$. With probability $(1 - \alpha)$, the surfer teleports to a random state via the stochastic vector $\vect{v}$. With probability $\alpha$, the surfer wants to transition to node $i$ with probability $\tens{P}(i, j, k)$. However, the surfer \emph{spaces out} and forgets that s/he came from node $k$.
Instead, the surfer guesses the previous state, based on the historical distribution over the state space.
Formally, the surfer guesses node $\ell$ with probability $\frac{1}{t + n}\left(1 + \sum_{r=1}^{t} \mathbb{I}\left[S_{t} = \ell\right]\right)$. It is important to note that although this process is an approximation to a second-order Markov chain, the process is no longer Markovian.

\subsection{Second left eigenvector}
\label{sec:033eigenvector}

Following the steps of spectral clustering, we now need to obtain an equivalent of the second left eigenvector (Sec.~\ref{sec:023evec_conductance}).
In particular, we now show how to get a relevant eigenvector from the multilinear PageRank vector $\vect{x}$ and the transition tensor $\mat{P}$.
The multilinear PageRank vector $\vect{x}$ satisfying
$\alpha \mat{R} \cdot \left(\vect{x} \otimes \vect{x}\right) + (1 - \alpha)\vect{v} = \vect{x}$
can also be re-interpreted as the stationary distribution of a particular Markov chain.
Specifically, define the matrix 
\begin{equation}
\tenstrans := \sum_{k=1}^{n}x_k\mat{R}_k.
\end{equation}
(Recall that $\mat{R}_k = \tens{P}(:, :, k)$ is the $k$th $n \times n$ block of $\mat{R}$).
The matrix $\tenstrans$ is column stochastic because each $\mat{R}_k$ is column stochastic and $\sum_{k=1}^{n}x_k = 1$.
Note that
\[
\mat{R}\cdot\left(\vect{x} \otimes \vect{x}\right)
 = \sum_{k=1}^{n}\mat{R}_k\left(x_k\vect{x}\right)
 = \left(\sum_{k=1}^{n}x_k\mat{R}_k\right)\vect{x}
 = \tenstrans \cdot \vect{x}.
 \]

Hence, $\vect{x}$ is the stationary distribution of the PageRank system $\alpha\tenstrans \cdot \vect{x} + (1 - \alpha)\vect{v} = \vect{x}$.
However, the transition matrix depends on $\vect{x}$ itself.

We use the second left eigenvector of $\tenstrans$ for our higher-order spectral clustering algorithm.
Heuristically, $\tenstrans$ is a weighted sum of $n$ ``views'' of the graph (the matrices $\mat{R}_k$),
from the perspective of each node ($k$, $1 \le k \le n$),
according to three-dimensional graph data (the tensor $\tens{T}$).
If node $k$ has a large influence on the three-dimensional data, then $x_k$ will be large and we will weight data associated with node $k$ more heavily.
The ordering of the eigenvector will be used for a sweep cut on the vertices.

%
%

\subsection{Sweep cuts}
\label{sec:034sweep}

The last remaining step is to generalize the notion of the sweep cut (Sec.~\ref{sec:024sweep}).
Recall that the sweep cut takes some ordering on the nodes, $\sigma$,
and computes some score $f(S_k)$ for each cut $S_k = \{\sigma_1, \ldots, \sigma_k\}$.
Finally, the sweep cut procedure returns $\arg\max_{S_k} f(S_k)$.
The eigenvector from Sec.~\ref{sec:033eigenvector} provides us with an ordering for a sweep cut,
just as in the two-dimensional case (Sec.~\ref{sec:024sweep}).
We generalize the cut and volume measures as follows:
\begin{eqnarray}
\cutthree{S}  &:=& \sum_{i, j, k \in V}\tens{T}(i, j, k) - \sum_{i, j, k \in S}\tens{T}(i, j, k) - \sum_{i, j, k \in \bar{S}}\tens{T}(i, j, k) \nonumber \\
\volthree{S}  &:=& \sum \tens{T}(S, V, V) \nonumber.
\end{eqnarray}
And we define ``higher-order conductance" (denoted $\phi_3$) and ``higher-order expansion" (denoted $\rho_3$) as
\begin{eqnarray}
\condthree{S} &:=& \frac{\cutthree{S}}{\min\left(\volthree{S}, \volthree{\bar{S}}\right)} \label{eqn:multi_cond} \\
\rho_3(S)     &:=& \frac{\cutthree{S}}{\min\left(|S|, |\bar{S}|\right)}. \label{eqn:multi_exp}
\end{eqnarray}
This definition ensures that $\phi_3(S) \in [0, 1]$, as in standard conductance.

\subsection{Tensor spectral clustering framework}
\label{sec:035algorithm}

\begin{algorithm}[t]
\caption{Tensor Spectral Partitioning}
\KwData{$G = (V, E)$, $|V| = n$, $\tens{T} \in \mathbb{R}^{n \times n \times n}_+$,
                 dangling distribution vector $\vect{u}$,
                $\alpha \in (0, 1)$}
\KwResult{Set of nodes $S \subset V$}
\For{$1 \le i, j, k \le n$, $\tens{T}(i, j, k) \neq 0$}{
   $\tens{P}(i, j, k) \leftarrow \tens{T}(i, j, k) / \sum_{i} \tens{T}(i, j, k)$
}
\For{$j, k$ such that $\sum_{i}\tens{T}(i, j, k) = 0$}{
  $\tens{P}(:, j, k) \leftarrow u$
}
$\vect{x} \leftarrow MultilinearPageRank(\alpha, \tens{P})$ \\
$\mat{R}_k \leftarrow \tens{P}(:, :, k)$ \\
$\tenstrans \leftarrow \sum_{k}x_k\mat{R}_k$ \\
Compute second left eigenvector $\vect{z}$ of $\tenstrans$ \\
$\sigma \leftarrow$ sorted ordering of $\vect{z}$ \\
$S \leftarrow SweepCut(\sigma, G)$
\label{alg:tsp}
\end{algorithm}
\begin{algorithm}[t]
\caption{Tensor Spectral Clustering (\tsc)}
\KwData{$G = (V, E)$, $|V| = n$, $\tens{T} \in \mathbb{R}^{n \times n \times n}_+$,
                 dangling distribution vector $\vect{u}$,
                $\alpha \in (0, 1)$,
                number of clusters $C$}
\KwResult{Partition $\mathcal{P}$ of $V$}
\If{$|\mathcal{P}| < C$}{
  Partition $G$ into $G_1 = (V_1, E_1)$ and $G_2 = (V_2, E_2)$ via Algorithm~\ref{alg:tsp}. \\
  $\mathcal{P} = \mathcal{P} \cup \{ V_1, V_2 \}$. \\
  Recurse on largest component in $\mathcal{P}$.
}
\label{alg:tsc}
\end{algorithm}

We now have higher-order analogs of all the spectral clustering tools from Sec.~\ref{sec:020preliminaries}.
The central routine of our tensor spectral clustering framework is given in Algorithm~\ref{alg:tsp},
which is the tensor spectral partitioning algorithm.
This subroutine takes a data tensor $\tens{T}$ of third-order information about a graph $G$ and partitions the nodes into two sets.
Algorithm~\ref{alg:tsc} is the clustering algorithm that performs recursive bisection in order to decompose the graph into several components.
This algorithm can also be used with other partitioning algorithms \cite{gleich2006hierarchical},
and we will take that approach in Sec.~\ref{sec:050applications}.

\subsection{Complexity}
\label{sec:036complexity}

The complexity of Algorithm~\ref{alg:tsc} depends on the sparsity of the data tensor $\tens{T}$,
\ie the number of higher-order structures in the network.
The algorithm depends on the sparsity in three ways.
First, all of the higher-order structures in the network must be enumerated as an upfront cost.
Second, the sparsity affects the complexity of the multilinear PageRank subroutine in Algorithm~\ref{alg:tsp}.
Third, the number of non-zeroes in $\tenstrans$ is equal to the number of the higher-order structures.
When performing recursive bisection (Algorithm~\ref{alg:tsc}),
there is no upfront cost to enumerate the structures---we only
need to determine which structures are preserved under the partition.

We argue that the upfront cost is not cumbersome.
Triangle enumeration for real-world undirected networks is a well-studied problem~\cite{cohen2009graph,schank2005finding}.
For directed graphs, we can: (1) undirect the graph, (2) use high-performance code to enumerate the triangles, and
(3) stream through the triangles and only keep those that are the directed structure of interest.

Now, we consider the second and third computations.
Let $T$ be the number of non-zeroes in $\tens{T}$.
There are several methods for computing the multilinear PageRank vector in Algorithm~\ref{alg:tsp}~\cite{gleich2014multilinear}.
We use the shifted fixed point method (akin to the symmetric higher-order power method~\cite{kolda2011shifted}).
Each iteration takes $O(T)$ time, and we found that this method converges very quickly---usually within a handful of iterations.
The computation of the second left eigenvector of $\tenstrans$ dominates the running time.
We use the power method to compute this eigenvector.
Since $\tenstrans$ has $T$ non-zero entries, each iteration takes $O(T)$ time.

Finally, we look at the relationship between $T$ and the size of the graph.
In theory, $T$ can be $O(n^3)$, but this is far from what we see in practice.
For the large networks considered in Sec.~\ref{sec:060d3c_large},
$T \le 6m$ (see Table~\ref{tab:net_statistics}).

To summarize, the majority of our time is spent computing the eigenvector of $\tenstrans$.
Each iteration takes $O(T)$ time, and $T \le 6m$ for the algorithms we consider.
Standard spectral algorithms also compute an eigenvector with the power method,
but each iteration is only $O(m)$ time.
Thus, we can think of our algorithms as running within an order of magnitude of standard algorithms.
However, when moving beyond third-order structures, we note that $T$ can be much larger.

\section{Generalizations and directed 3-cycle cut}
\label{sec:040generalizations}
Before transitioning to applications, we mention two important generalities of our framework and discuss directed 3-cycle cuts.
The directed 3-cycle will play an important role for our applications in Sections~\ref{sec:050applications}~and~\ref{sec:060d3c_large}.

\subsection{Generalizations}
\label{sec:041gen}
Our first generalization deals with data beyond three dimensions.
While we have presented the algorithm with three-dimensional data, the same ideas carry through for higher-order data.
The multilinear PageRank vector can still be computed, although $\alpha$ must be smaller to guarantee convergence~\cite{gleich2014multilinear}.
However, in practice, we do not observe large $\alpha$ impeding convergence.

Second, our \tsc algorithm is a strict generalization of traditional spectral clustering in the following sense.
There is a data tensor $\tens{T}$ such that for any multilinear PageRank vector $\vect{x}$, we compute the same eigenvector that conductance cut computes.
In particular, we can always define $\tens{T}(i, j, k) = \mat{A}_{ij}$, where $\mat{A}$ is the adjacency matrix.
Then $\mat{R}_k = \mat{P}$, $1 \le k \le n$, and
$\tenstrans = \sum_{k}x_kP = \mat{P}\sum_{k}x_k = \mat{P}$.

\subsection{Directed 3-cycle cuts}
\label{sec:042d3c}
We now turn our attention to a particular three-dimensional representation of directed graph data: directed 3-cycles (D3Cs),
\ie, sets of edges $(i, j)$, $(j, k)$, and $(k, i)$ for distinct nodes $i$, $j$, and $k$.
Such structures are important for community detection \cite{klymko2014using} and are natural motifs for network feedback.
We will use this structure for applications in Sections~\ref{sec:050applications}~and~\ref{sec:060d3c_large}.
The data tensor we use for directed 3-cycle cuts is
\begin{equation}
   \tens{T}(i, j, k) = \left\{
     \begin{array}{ll}
       2 &  i, j, k \text{ form two D3Cs} \\
       1 &  i, j, k \text{ form one D3C} \\
       0 & \text{otherwise}
     \end{array}
   \right.
   \label{eqn:d3c_tens}
\end{equation}
Nodes $i$, $j$, and $k$ form two D3Cs if and only if every possible directed edge between them is present.
When $\tens{T}(i, j, k) = 1$, we do not differentiate between $0$, $1$, or $2$ reciprocated edges.
For \emph{directed 3-cycle cut}, we want to find partitions of the graph that do not cut many D3Cs.

\subsection{Strongly connected components}
\label{sec:043SCCs}
We now show that \tsc correctly breaks up strongly connected components when using
the data tensor in \eqnref{eqn:d3c_tens}.
Suppose we have an undirected graph $G = (V, E)$ with two connected components $V_1$ and $V_2$.
A standard result of the spectral method for conductance cut on undirected graphs (Sec.~\ref{sec:023evec_conductance})
is that there is a second left eigenvector $\vect{z}$ of $\mat{P}$ such that $\vect{z}^{\Tra}\mat{P} = \vect{z}$,
and $\sign{z_i} = -\sign{z_j}$ for $i \in V_1$, $j \in V_2$ \cite{chung1997spectral}.
This means that the ordering induced by the eigenvector correctly separates the components.
A similar result holds for strongly connected components in a directed graph $G$ using the directed Laplacian.

We now present a similar result for directed 3-cycle cut.
First, we observe the following:
\emph{there is no directed 3-cycle that has nodes from different strongly connected components}.
Now, Lemma~\ref{lem:scc_evec} shows that if we have a graph with two strongly connected components, then,
under some conditions, the second left eigenvector computed by Algorithm~\ref{alg:tsp} correctly partitions the two strongly connected components.

\begin{lemma}\label{lem:scc_evec}
Consider a directed graph $G = (V, E)$ with two components $V_1$ and $V_2$ such that there are no directed 3-cycles containing a node $i \in V_1$ and $j \in V_2$.
Assume that the directed 3-cycle tensor $\tens{T}$ is given by \eqnref{eqn:d3c_tens}.
Augment the corresponding transition matrices $\mat{R}_k$ with a sink node $t$ so that transition involving $j \in V_1$, $k \in V_2$ (or vice versa) jump to the sink node, \ie,
$\tens{P}(i, j, k) = \mathbb{I}\left(i = t\right)$.
Finally, instead of using the dangling distribution vector $\vect{u}$ to fill in $\tens{P}$, assume that when $\sum_{i} \tens{T}(i, j, k) = 0$ for $j, k \in V_1$, $\tens{P}(i, j, k) = \mathbb{I}(i \in V_1) /  |V_1|$.
(And the same for transitions involving $j, k \in V_2$).

Then $\tenstrans$ has a second left eigenvector $\vect{z}$ with eigenvalue $1$ such that
$\vect{z}^{\Tra}\allones = 0$ and $\sign{z_i} = -\sign{z_j}$ for any $i \in V_1$, $j \in V_2$.
\end{lemma}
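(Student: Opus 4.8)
The plan is to show that the hypotheses force a reducible block structure on $\tenstrans$, so that the eigenvalue $1$ of this column-stochastic matrix has a left eigenspace of dimension at least two, and then to pick the element of that eigenspace with the stated sign pattern. First I would record the structural consequence of ``no directed $3$-cycle meets both $V_1$ and $V_2$'' together with the sink augmentation: for any previous state $k$, the fibre $\tens{P}(:, j, k)$ attached to a current state $j \in V_1$ is a stochastic vector supported in $V_1$ when $k \in V_1$ (it lists the D3C-completions of $(j,k)$, all of which lie in $V_1$, or it is the uniform-on-$V_1$ fill) and equals $\mathbb{I}(i=t)$ when $k \in V_2$; symmetrically for $V_2$. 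Contracting with the multilinear PageRank weights, $\tenstrans = \sum_k x_k \mat{R}_k$ therefore places no mass on $V_2$ in any column indexed by $V_1$, and no mass on $V_1$ in any column indexed by $V_2$, so in the node order $(V_1, V_2, t)$,
\[
\tenstrans = \begin{pmatrix} \mat{B}_1 & 0 & 0 \\ 0 & \mat{B}_2 & 0 \\ \ast & \ast & 1 \end{pmatrix},
\qquad \mat{B}_1 \ge 0 \text{ on } V_1 \times V_1,\ \ \mat{B}_2 \ge 0 \text{ on } V_2 \times V_2 .
\]

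Next I would argue that the component-uniform dangling fill (replacing $\vect{u}$) makes each diagonal block \emph{column-stochastic}: every column of $\tenstrans$ restricted to $V_1$ retains all of its mass inside $V_1$, and, since the fill only adds edges, the chain on $V_1$ stays connected, so $\mat{B}_1$ is irreducible and has a one-dimensional left $1$-eigenspace spanned by the all-ones row vector $\allones_{V_1}^{\Tra}$; likewise for $\mat{B}_2$. Granting this, the row vectors $\allones_{V_1}^{\Tra}$ and $\allones_{V_2}^{\Tra}$ (extended by zeros) each satisfy $\allones_{V_i}^{\Tra}\tenstrans = \allones_{V_i}^{\Tra}$ --- inside $V_i$ this is the previous sentence, and in the other component's columns and in the $t$-column the relevant block vanishes --- and they are linearly independent. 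Setting
\[
\vect{z} := |V_2|\,\allones_{V_1} - |V_1|\,\allones_{V_2}
\]
(that is, $z_i = |V_2|$ for $i \in V_1$, $z_j = -|V_1|$ for $j \in V_2$, $z_t = 0$) gives $\vect{z}^{\Tra}\tenstrans = \vect{z}^{\Tra}$, $\vect{z}^{\Tra}\allones = |V_2|\,|V_1| - |V_1|\,|V_2| = 0$, and $\sign{z_i} = 1$ on $V_1$ versus $\sign{z_j} = -1$ on $V_2$, which is exactly the assertion; if one also wants the $t$-coordinate to respect $\vect{z}^{\Tra}\allones = 0$ internally, one adds a multiple of the $1$-eigenvector $\mathbb{I}_{\{t\}}^{\Tra}$, which alters only $z_t$ and neither the $V_1$/$V_2$ sign pattern nor the eigenvector property.

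The only nonroutine step is the middle one. Everything else is the classical observation that a disconnected graph has the component-indicator vectors spanning the $1$-eigenspace of its random walk, transported through the tensor contraction $\tens{P} \mapsto \tenstrans$. The work --- and what I expect to be the main obstacle --- is verifying that after the sink augmentation and the component-uniform fill the matrix $\tenstrans$ is genuinely reducible with a \emph{stochastic} (not merely sub-stochastic) block over each $V_i$, i.e.\ that no weight $x_k$ can route a current state in $V_1$ out of $V_1$, so that $1$ is a repeated eigenvalue rather than the Perron value of a strictly sub-stochastic block. That is where the no-cross-component-D3C hypothesis, the chosen sink convention, and the support of the multilinear PageRank weights $x_k$ all have to be used together.
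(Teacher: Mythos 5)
Your overall strategy---exhibit a reducible block structure for $\tenstrans$, find two independent left $1$-eigenvectors tied to $V_1$ and $V_2$, and take a signed combination orthogonal to $\allones$---is the same as the paper's. But the step you yourself single out as ``the only nonroutine step'' is exactly where the argument breaks, and it does break: the diagonal blocks $\mat{B}_1,\mat{B}_2$ are \emph{not} column-stochastic. Fix $j \in V_1$. For $k \in V_1$ the fibre $\tens{P}(:,j,k)$ is indeed a stochastic vector supported in $V_1$, but for every $k \in V_2$ the pair $(j,k)$ is cross-component, so by the sink convention $\tens{P}(:,j,k) = \mathbb{I}_{\{t\}}$. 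Hence the $j$th column of $\tenstrans = \sum_k x_k \mat{R}_k$ carries mass $\sum_{k \in V_1} x_k$ inside $V_1$ and mass $1 - \sum_{k \in V_1} x_k$ on the sink. Unless the multilinear PageRank vector $\vect{x}$ is supported entirely in $V_1$ (it is not: with $\vect{v} = \tfrac{1}{n}\allones$ it has positive mass everywhere), $\mat{B}_1$ has column sums strictly below $1$, its spectral radius is below $1$, and $\allones_{V_1}^{\Tra}\mat{B}_1 \neq \allones_{V_1}^{\Tra}$. Your vector $\vect{z} = |V_2|\,\allones_{V_1} - |V_1|\,\allones_{V_2}$ therefore does not satisfy $\vect{z}^{\Tra}\tenstrans = \vect{z}^{\Tra}$. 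The proposed repair---adding a multiple of $\mathbb{I}_{\{t\}}^{\Tra}$---does not help, because $\mathbb{I}_{\{t\}}$ is a \emph{right} $1$-eigenvector of $\tenstrans$ (the $t$ column is $\mathbb{I}_{\{t\}}$) but not a left one: the $t$ row of $\tenstrans$ is exactly where all the leaked mass accumulates, so it is far from $\mathbb{I}_{\{t\}}^{\Tra}$.

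The paper's proof diverges at precisely this point: its candidate vectors are $\vect{y}_1^{\Tra} = \begin{bmatrix}\allones^{\Tra} & 0 & 1\end{bmatrix}$ and $\vect{y}_2^{\Tra} = \begin{bmatrix}0 & \allones^{\Tra} & 1\end{bmatrix}$, i.e., the sink coordinate is set to $1$ rather than $0$, so that probability absorbed by $t$ is still counted by the test vector; the invariance $\vect{y}_i^{\Tra}\mat{R}_k = \vect{y}_i^{\Tra}$ is checked at the level of each individual $\mat{R}_k$ before summing against $\vect{x}$, and the final eigenvector is $\vect{z} = \vect{y}_1 - \frac{|V_1|+1}{|V_2|+1}\vect{y}_2$, which has a nonzero $t$-coordinate but the required signs on $V_1$ and $V_2$. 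So the missing idea in your write-up is the sink bookkeeping: once any $x_k$ with $k$ in the opposite component is positive, the component-indicator vectors with a zero sink entry cannot be left $1$-eigenvectors, and any correct argument must use candidate vectors that weight the sink and must track where the cross-component columns send their mass.
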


\begin{proof}
See the full version of the paper.\footnote{Available from \url{https://github.com/arbenson/tensor-sc}.}
\end{proof}

\section{Applications on synthetic networks}
\label{sec:050applications}
We now explore applications of our \tsc framework.
The purpose of this section is to illustrate that explicitly partitioning higher-order network data can improve partitioning and clustering on directed networks.
The examples that follow are small and synthetic but illustrative.
In future work, we plan to use these ideas on real data sets.

For the applications in this section,
we use the following parameters for the tensor spectral clustering algorithm:
$\alpha = 0.99$ for the multilinear PageRank vector, $\gamma = 0.01$ for SS-HOPM,
$\vect{u} = \vect{v} = \frac{1}{n}\allones$,
and the higher-order conductance score function (\eqnref{eqn:multi_cond}).

\subsection{Layered flow networks}
\label{sec:051layered}
Our first example is a network consisting of multiple layers, where feedback loops primarily occur within a layer.
Information tends to flow ``downwards" from one layer to the next.
In other words, most edges between two layers point in the same direction.
Figure~\ref{fig:layered_flow} gives an example of such a network with three layers, each consisting of four nodes.

We are interested in separating the layers of the network via our \tsc algorithm.
Feedback in a directed network is synonymous with directed cycle.
For this example, we count all directed 2-cycles (\ie, reciprocated edges) and directed 3-cycles.
In order to account for the directed 2-cycles, we will say that the data tensor $\tens{T}$ is equal to one for any 
index of the form $(i, i, j)$, $(i, j, i)$, or $(j, i, i)$ when nodes $i$ and $j$ have reciprocated edges.
Formally, the data tensor is:
\[
   \tens{T}(i, j, k) = \left\{
     \begin{array}{ll}
       2 &  i, j, k \text{ distinct and form two D3Cs} \\
       1 &  i, j, k \text{ distinct and form one D3C} \\
       1 & (k = j \text{ or } k = i) \text{ and } (i, j), (j, i) \in E \\
       1 & j = i \text{ and } (i, k), (k, i) \in E    \\
     \end{array}
   \right.
\]

Figure~\ref{fig:layered_flow} lists the three communities found by
(1) \tsc (Algorithm~\ref{alg:tsc} with $C = 3$),
(2) the directed Laplacian (DL), and
(3) the directed Laplacian on the subgraph only including edges involved in at least one directed 2-cycle or directed 3-cycle (Sub-DL).
\tsc is the only method that correctly identifies the three communities.
Sub-DL performs almost as well, but misclassifies node $1$, placing it with the green nodes two layers beneath.
In general, DL does not do well because there are a large number of edges between layers,
and the algorithm does not want to cut these edges.

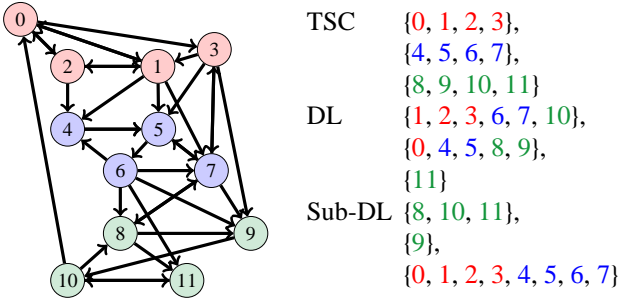
\begin{figure}
\begin{minipage}{\textwidth}
\begin{minipage}{0.22\textwidth}
 \tikzset{first node/.style={circle,fill=red!20,draw,minimum size=0.6cm,inner sep=0pt},}
  \tikzset{second node/.style={circle,fill=blue!20,draw,minimum size=0.6cm,inner sep=0pt},}
  \tikzset{third node/.style={circle,fill=green!20,draw,minimum size=0.6cm,inner sep=0pt},}
\scalebox{0.75}{
 \begin{tikzpicture}
    \node[first node] (0) {$0$};
    \node[first node] (1) [below right = 0.4cm and 2cm of 0]  {$1$};
    \node[first node] (2) [below right = 0.4cm and 0.4cm of 0] {$2$};
    \node[first node] (3) [below right = 0.1cm and 3cm of 0] {$3$};    
    
    \node[second node] (4) [below = 0.5cm of 2] {$4$};
    \node[second node] (5) [right = 1cm of 4] {$5$};    
    \node[second node] (6) [below right = 0.3cm and 0.5cm of 4] {$6$};        
    \node[second node] (7) [below right = 0.3cm and 0.5cm of 5] {$7$};
    
    \node[third node] (8) [below= 0.5cm of 6] {$8$};
    \node[third node] (9) [right = 1.7cm of 8] {$9$};
    \node[third node] (10) [below left = 0.4cm and 0.5cm of 8] {$10$};
    \node[third node] (11) [right = 1.5cm of 10] {$11$};

    \path[draw, ultra thick] (0) edge [->] (1);
    \path[draw, ultra thick] (1) edge [->] (2);
    \path[draw, ultra thick] (2) edge [->] (0);
    \path[draw, ultra thick] (1) edge [->] (0);
    \path[draw, ultra thick] (2) edge [->] (1);
    \path[draw, ultra thick] (0) edge [->] (2);    
    \path[draw, ultra thick] (0) edge [->] (3);        
    \path[draw, ultra thick] (3) edge [->] (1);
    
    \path[draw, ultra thick] (4) edge [->] (5);
    \path[draw, ultra thick] (5) edge [->] (6);
    \path[draw, ultra thick] (6) edge [->] (4);    
    \path[draw, ultra thick] (6) edge [->] (7);        
    \path[draw, ultra thick] (7) edge [->] (5);
    \path[draw, ultra thick] (5) edge [->] (7);            
    
    \path[draw, ultra thick] (8) edge [->] (9);
    \path[draw, ultra thick] (9) edge [->] (10);
    \path[draw, ultra thick] (10) edge [->] (8);    
    \path[draw, ultra thick] (10) edge [->] (11);        
    \path[draw, ultra thick] (11) edge [->] (10);
    \path[draw, ultra thick] (8) edge [->] (11);    
    
    \path[draw, ultra thick] (1) edge [->] (4);    
    \path[draw, ultra thick] (1) edge [->] (5);        
    \path[draw, ultra thick] (1) edge [->] (7);        
    \path[draw, ultra thick] (2) edge [->] (4);    
    \path[draw, ultra thick] (4) edge [->] (5);            
    \path[draw, ultra thick] (3) edge [->] (5);
    \path[draw, ultra thick] (3) edge [->] (7);
    
    \path[draw, ultra thick] (6) edge [->] (8);    
    \path[draw, ultra thick] (6) edge [->] (9);        
    \path[draw, ultra thick] (6) edge [->] (11);      
    \path[draw, ultra thick] (7) edge [->] (8);    
    \path[draw, ultra thick] (7) edge [->] (9);              f
    
    \path[draw, ultra thick] (3) edge [->] (9);            
    
    \path[draw, ultra thick] (10) edge [->] (0); 
    \path[draw, ultra thick] (8) edge [->] (7);     
    \path[draw, ultra thick] (7) edge [->] (3);         
    
\end{tikzpicture}
} 
\end{minipage}
\begin{minipage}{0.7\textwidth}
\begin{tabular}{l @{\hskip 0.15cm} l}
\tsc &
\{\textcolor{red}{0},
\textcolor{red}{1},
\textcolor{red}{2},
\textcolor{red}{3}\}, \\
&\{\textcolor{blue}{4},
\textcolor{blue}{5},
\textcolor{blue}{6},
\textcolor{blue}{7}\},  \\
&\{\textcolor{green}{8},
\textcolor{green}{9},
\textcolor{green}{10},
\textcolor{green}{11}\} \\
DL &
\{\textcolor{red}{1},
\textcolor{red}{2},
\textcolor{red}{3},
\textcolor{blue}{6},
\textcolor{blue}{7},
\textcolor{green}{10}\}, \\
&\{\textcolor{red}{0},
\textcolor{blue}{4},
\textcolor{blue}{5},
\textcolor{green}{8},
\textcolor{green}{9}\}, \\
&\{\textcolor{green}{11}\} \\
Sub-DL&\{\textcolor{green}{8},
\textcolor{green}{10},
\textcolor{green}{11}\}, \\
&\{\textcolor{green}{9}\}, \\
&\{\textcolor{red}{0},
\textcolor{red}{1},
\textcolor{red}{2},
\textcolor{red}{3}, 
\textcolor{blue}{4},
\textcolor{blue}{5},
\textcolor{blue}{6},
\textcolor{blue}{7}\}
\end{tabular}
\end{minipage}
\end{minipage}
\caption{
(Left)
Layered flow network, where almost all feedback occurs at three different layers (specified by the blue, red, and green nodes).
There are many edges going from one layer to the layers below it.
(Right)
Three communities found when using \tsc,
the directed Laplacian (DL),
and the directed Laplacian on the subgraph of
edges participating in at least one directed 2- or 3-cycle (Sub-DL).
Only \tsc correctly identifies all three communities.
\vspace{-0.28cm}
}
\label{fig:layered_flow}
\end{figure}

\subsection{Anomaly detection}
\label{sec:052anomaly}
Our second example is anomaly detection.
In many real networks, most directed 3-cycles have at least one reciprocated edge~\cite{klymko2014using}.
Thus, a set of nodes with many directed 3-cycles and few reciprocated edges between them would be highly anamolous.
The goal of this example is to show that our \tsc framework
can find such sets of nodes when they are planted in a network.

Figure~\ref{fig:anomaly} shows a network where the anomalous cluster we want to identify is nodes 0--5.
All triangles between nodes 0--5 are directed 3-cycles with no reciprocated edges.
Nodes 6--21 connect to each other according to a  Erd\H{o}s-R\'{e}nyi model with edge probability 0.25.
Finally, nodes 0--5 each have four outgoing and two incoming edges with nodes 6--21.
In total, there are 18 directed 3-cycles with no reciprocated edges, and 8 of them occur between nodes 0--5.

To use the \tsc framework, we form a data tensor that only counts directed 3-cycles with no reciprocated edges:
\begin{eqnarray}\label{eqn:d3c_noback}
\tens{T}(i, j, k)
&=& \mathbb{I}\left((i, j), (j, k), (k, i) \in E, (j, i), (k, j), (i, k) \notin E\right) \nonumber \\
&+& \mathbb{I}\left((j, i), (k, j), (i, k) \in E, (i, j), (j, k), (k, i) \notin E\right) \nonumber
\end{eqnarray}

Figure~\ref{fig:anomaly} lists the smaller of the two communities found by
(1) \tsc (Algorithm~\ref{alg:tsc} with $C = 2$),
(2) the directed Laplacian (DL),
and (3) the directed Laplacian on the subgraph only including edges involved in at least one directed 3-cycle with no reciprocated edges (Sub-DL).
We see that only \tsc correctly captures the planted anomalous community.
DL does not capture any information about directed 3-cycles with no reciprocated edges,
and hence the cut does not make sense in this context.
Sub-DL correctly captures nodes $0$, $1$, $4$, and $5$, but misses nodes $2$ and $3$.

\begin{figure}
\begin{minipage}{\textwidth}
\begin{minipage}{0.22\textwidth}

\tikzset{first node/.style={circle,fill=red!20,draw,minimum size=0.6cm,inner sep=0pt},}
\tikzset{second node/.style={circle,fill=white!20,draw,minimum size=0.6cm,inner sep=0pt},}
\scalebox{0.75}{
 \begin{tikzpicture}
    \node[first node] (0) {$0$};
    \node[first node] (1) [above right = 0.2cm and 0.2cm of 0]  {$1$};
    \node[first node] (2) [above left = 0.5cm and 0.5cm of 1]  {$2$};    
    \node[first node] (3) [left = 0.25cm of 2] {$3$};
    \node[first node] (4) [below left = 0.5cm and 0.2cm of 3] {$4$};    
    \node[first node] (5) [below right = 0.25cm and 0.3cm of 4] {$5$};        
    
    \node[second node] (6) [above right = 0.4cm and 0.3cm of 1] {$6$};
    \node[second node] (7) [above = 0.4cm of 6] {$7$};
    \node[second node] (8) [left = 0.25cm of 7] {$8$};
    \node[second node] (9) [left = 0.25cm of 8] {$9$};
    \node[second node] (10) [left = 0.25cm of 9] {$10$};
    \node[second node] (11) [left = 0.25cm of 10] {$11$};
    \node[second node] (12) [below left = 0.05cm and 0.25cm of 11] {$12$};
    \node[second node] (13) [below = 0.25cm of 12] {$13$};
    \node[second node] (14) [below = 0.25cm of 13] {$14$};
    \node[second node] (15) [below = 0.25cm of 14] {$15$};
    \node[second node] (16) [below = 0.25cm of 15] {$16$};
    \node[second node] (17) [right = 0.25cm of 16] {$17$};    
    \node[second node] (18) [right = 0.25cm of 17] {$18$};         
    \node[second node] (19) [right = 0.25cm of 18] {$19$};           
    \node[second node] (20) [right = 0.25cm of 19] {$20$};
    \node[second node] (21) [above right = 0.25cm and 0.1cm of 20] {$21$};
                 
                      \begin{pgfonlayer}{bg}
                      \path[draw, ultra thick] (0) edge [->] (2);
                      \path[draw, ultra thick] (0) edge [->] (3);
                      \path[draw, dashed] (0) edge [->] (10);
                      \path[draw, dashed] (0) edge [->] (12);
                      \path[draw, dashed] (0) edge [->] (15);
                      \path[draw, ultra thick] (1) edge [->] (2);
                      \path[draw, ultra thick] (1) edge [->] (3);
                      \path[draw, dashed] (1) edge [->] (9);
                      \path[draw, dashed] (1) edge [->] (14);
                      \path[draw, dashed] (1) edge [->] (19);
                      \path[draw, dashed] (1) edge [->] (21);
                      \path[draw, ultra thick] (2) edge [->] (4);
                      \path[draw, ultra thick] (2) edge [->] (5);
                      \path[draw, dashed] (2) edge [->] (11);
                      \path[draw, dashed] (2) edge [->] (16);
                      \path[draw, dashed] (2) edge [->] (18);
                      \path[draw, dashed] (2) edge [->] (19);
                      \path[draw, ultra thick] (3) edge [->] (4);
                      \path[draw, ultra thick] (3) edge [->] (5);
                      \path[draw, dashed] (3) edge [->] (8);
                      \path[draw, dashed] (3) edge [->] (16);
                      \path[draw, dashed] (3) edge [->] (18);
                      \path[draw, ultra thick] (4) edge [->] (0);
                      \path[draw, ultra thick] (4) edge [->] (1);
                      \path[draw, dashed] (4) edge [->] (6);
                      \path[draw, dashed] (4) edge [->] (11);
                      \path[draw, dashed] (4) edge [->] (17);
                      \path[draw, dashed] (4) edge [->] (19);
                      \path[draw, ultra thick] (5) edge [->] (0);
                      \path[draw, ultra thick] (5) edge [->] (1);
                      \path[draw, dashed] (5) edge [->] (12);
                      \path[draw, dashed] (5) edge [->] (13);
                      \path[draw, dashed] (5) edge [->] (17);
                      \path[draw, dashed] (5) edge [->] (21);
                      \path[draw, dashed] (6) edge [->] (4);
                      \path[draw, dashed] (6) edge [->] (7);
                      \path[draw, dashed] (6) edge [->] (12);
                      \path[draw, dashed] (6) edge [->] (18);
                      \path[draw, dashed] (6) edge [->] (20);
                      \path[draw, dashed] (7) edge [->] (6);
                      \path[draw, dashed] (7) edge [->] (14);
                      \path[draw, dashed] (7) edge [->] (16);
                      \path[draw, dashed] (7) edge [->] (21);
                      \path[draw, dashed] (8) edge [->] (3);
                      \path[draw, dashed] (8) edge [->] (15);
                      \path[draw, dashed] (8) edge [->] (17);
                      \path[draw, dashed] (8) edge [->] (19);
                      \path[draw, dashed] (8) edge [->] (20);
                      \path[draw, dashed] (9) edge [->] (1);
                      \path[draw, dashed] (9) edge [->] (5);
                      \path[draw, dashed] (9) edge [->] (8);
                      \path[draw, dashed] (9) edge [->] (15);
                      \path[draw, dashed] (10) edge [->] (20);
                      \path[draw, dashed] (11) edge [->] (4);
                      \path[draw, dashed] (11) edge [->] (5);
                      \path[draw, dashed] (11) edge [->] (6);
                      \path[draw, dashed] (11) edge [->] (7);
                      \path[draw, dashed] (11) edge [->] (10);
                      \path[draw, dashed] (11) edge [->] (15);
                      \path[draw, dashed] (12) edge [->] (3);
                      \path[draw, dashed] (12) edge [->] (9);
                      \path[draw, dashed] (12) edge [->] (13);
                      \path[draw, dashed] (12) edge [->] (15);
                      \path[draw, dashed] (12) edge [->] (18);
                      \path[draw, dashed] (12) edge [->] (19);
                      \path[draw, dashed] (13) edge [->] (0);
                      \path[draw, dashed] (13) edge [->] (9);
                      \path[draw, dashed] (13) edge [->] (11);
                      \path[draw, dashed] (13) edge [->] (12);
                      \path[draw, dashed] (13) edge [->] (16);
                      \path[draw, dashed] (13) edge [->] (18);
                      \path[draw, dashed] (14) edge [->] (7);
                      \path[draw, dashed] (14) edge [->] (13);
                      \path[draw, dashed] (14) edge [->] (20);
                      \path[draw, dashed] (15) edge [->] (0);
                      \path[draw, dashed] (15) edge [->] (10);
                      \path[draw, dashed] (15) edge [->] (14);
                      \path[draw, dashed] (15) edge [->] (18);
                      \path[draw, dashed] (15) edge [->] (20);
                      \path[draw, dashed] (16) edge [->] (1);
                      \path[draw, dashed] (16) edge [->] (8);
                      \path[draw, dashed] (16) edge [->] (12);
                      \path[draw, dashed] (16) edge [->] (13);
                      \path[draw, dashed] (16) edge [->] (15);
                      \path[draw, dashed] (16) edge [->] (20);
                      \path[draw, dashed] (17) edge [->] (6);
                      \path[draw, dashed] (17) edge [->] (8);
                      \path[draw, dashed] (17) edge [->] (16);
                      \path[draw, dashed] (18) edge [->] (2);
                      \path[draw, dashed] (18) edge [->] (9);
                      \path[draw, dashed] (18) edge [->] (11);
                      \path[draw, dashed] (18) edge [->] (16);
                      \path[draw, dashed] (18) edge [->] (20);
                      \path[draw, dashed] (19) edge [->] (8);
                      \path[draw, dashed] (19) edge [->] (18);
                      \path[draw, dashed] (19) edge [->] (20);
                      \path[draw, dashed] (20) edge [->] (9);
                      \path[draw, dashed] (20) edge [->] (13);
                      \path[draw, dashed] (20) edge [->] (18);
                      \path[draw, dashed] (21) edge [->] (7);
                      \path[draw, dashed] (21) edge [->] (14);
                      \path[draw, dashed] (21) edge [->] (15);
                      \path[draw, dashed] (21) edge [->] (17);
                      \path[draw, dashed] (21) edge [->] (19);
                      \path[draw, dashed] (21) edge [->] (20);
                           \end{pgfonlayer}
\end{tikzpicture}
}
\end{minipage}
\begin{minipage}{0.2\textwidth}
\begin{tabular}{l @{\hskip 0.15cm} l}
\tsc & \{\textcolor{red}{0}, \textcolor{red}{1}, \textcolor{red}{2}, \textcolor{red}{3}, \textcolor{red}{4}, \textcolor{red}{5},\\
& \phantom{\{}12, 13, 16\} \\
DL & \{\textcolor{red}{1}, \textcolor{red}{4}, \textcolor{red}{5}, 7, 8, 12, \\
& \phantom{\{}13, 15, 18, 20\} \\
Sub-DL & \{\textcolor{red}{0}, \textcolor{red}{1}, \textcolor{red}{4}, \textcolor{red}{5}, 9, 11\\
& \phantom{\{}16, 17, 19, 20\} \\
\end{tabular}
\end{minipage}
\end{minipage}

\caption{
(Left)
Network with planted anomalous cluster (nodes 0--5).
Between these nodes, there are many directed 3-cycles with no reciprocated edges (thick black lines).
Nodes 6--22 follow an Erd\H{o}s-R\'{e}nyi graph pattern with edges indicated by dashed lines.
(Right)
Smaller of two communities found by \tsc,
the directed Laplacian (DL),
and the directed Laplacian on the subgraph with only edges involved in a directed 3-cycle with no reciprocated edges (Sub-DL).
Only \tsc finds the entire anomalous cluster.
\vspace{-0.28cm}
}
\label{fig:anomaly}
\end{figure}
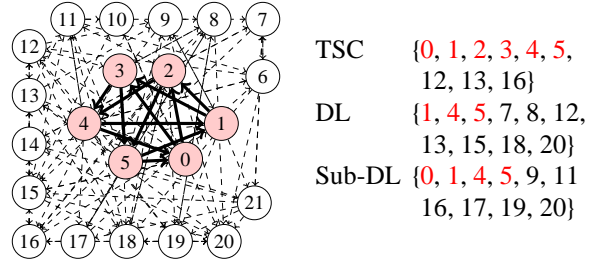

\section{Directed 3-cycle cuts on large networks}
\label{sec:060d3c_large}
We now transition to real data sets and show that our tensor spectral partitioning algorithm provides good cuts
for the directed 3-cycle (D3C) data tensor given by \eqnref{eqn:d3c_tens}.
We compare the following algorithms:
\vspace{-0.05cm}
\begin{itemize}\setlength{\itemsep}{-0.15cm}
\item \textbf{TSC}:
This is our proposed method (Algorithm~\ref{alg:tsc} with $C = 2$),
where the data tensor is given by \eqnref{eqn:d3c_tens}.
The sweep cut ordering is provided by the second left eigenvector of $\tenstrans$.
\item \textbf{Undirected Laplacian (UL)}:
The sweep cut ordering is provided by the second left eigenvector of the transition matrix of the
undirected version of the graph.
\item \textbf{Directed Laplacian (DL)} \cite{chung2005laplacians}:
The sweep cut ordering is provided by the second left eigenvector of $\mat{P}_{sym}$ in \eqnref{eqn:P_sym}.
\item \textbf{Asymmetric Laplacian (AL)} \cite{boley2011commute}:
The sweep cut ordering is provided by the second left eigenvector of $\mat{P}$.
\item \textbf{Co-clustering (Co)} \cite{dhillon2001co, rohe2012co}:
The sweep cut ordering is based on the second left and right singular vectors of a normalized adjacency matrix.
Specifically, let
$\mat{D}_{row} = \text{diag}(\mat{A}\allones)$ and 
$\mat{D}_{col} = \text{diag}(\mat{A}^{\Tra}\allones)$ and
let  $\mat{U}\mat{\Sigma}\mat{V}^{\Tra}$  be the singular value decomposition of $\mat{D}_{row}^{-1/2}\mat{A}\mat{D}_{col}^{-1/2}$.
The the sweep cut ordering is provided by $\mat{D}_{row}^{-1/2}\mat{U}(:, 2)$ or $\mat{D}_{col}^{-1/2}\mat{V}(:, 2)$.
We take the better of the two cuts.
\item \textbf{Random}:
The sweep cut ordering is random.
This provides a simple baseline.
\end{itemize}

\subsection{Data preprocessing}
\label{sec:061data_preprocessing}
Before running partitioning algorithms, we first filter the networks as follows:
(1) remove all edges that do not participate in any D3C, and
(2) take the largest strongly connected component of the remaining network.
We perform this filtering to make a fairer comparison between the different partitioning algorithms.
Table~\ref{tab:net_statistics} lists the relevant networks and statistics for the filtered networks that we use in our experiments.
We limit ourselves to a few representative networks to illustrate the main patterns we observed.
Data for more networks is available in the full version of this paper.
Networks are available from SNAP~\cite{snapnets}.

\begin{table}[tb]
\centering
\caption{
Statistics of networks used for computing directed 3-cycle cuts.
The statistics are taken on the largest strongly connected component of the network
after removing all edges that do not participate in any D3C.
}
\begin{tabular}{l l l l l}
\toprule
Network & $n = |V|$ & $m = |E|$ & \# D3Cs \\ \midrule
\dataset{email-EuAll} & 11,315 & 80,211 & 183,836 \\
\dataset{soc-Epinions1} & 15,963 & 262,779 & 738,231 \\
\dataset{wiki-Talk} & 52,411 & 957,753 & 5,138,613 \\
\dataset{twitter\_combined} & 57,959 & 1,371,621 & 6,921,399 \\
\bottomrule
\vspace{-0.9cm}
\end{tabular}
\label{tab:net_statistics}
\end{table}

\subsection{Results}
\label{sec:062results}
\begin{figure*}[tb]
\centering
\includegraphics[height=3.0cm]{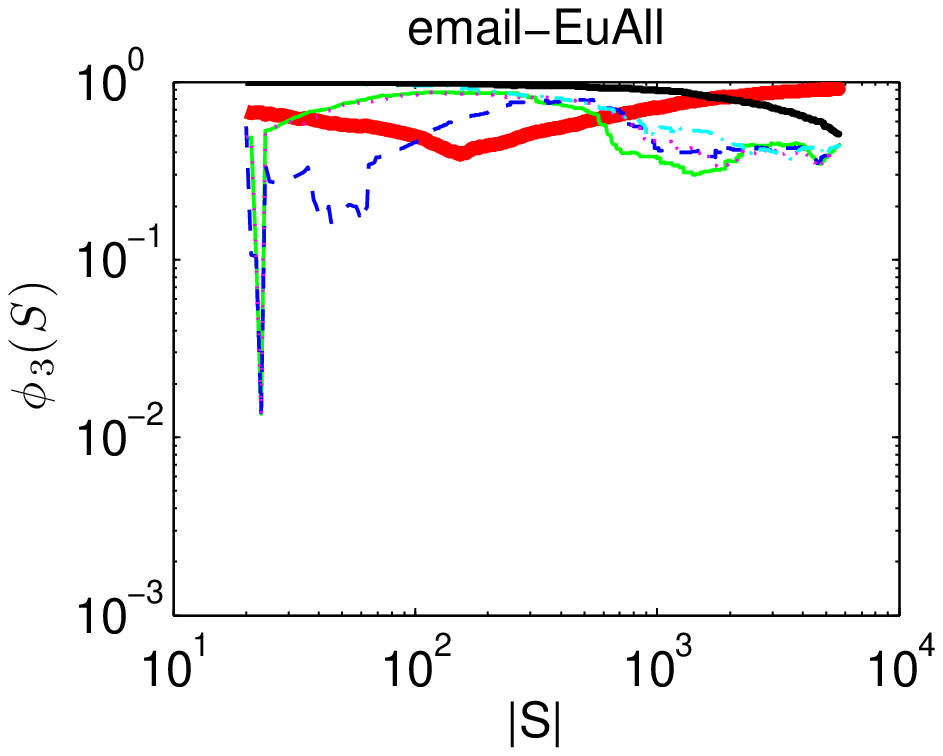}
\includegraphics[height=3.0cm]{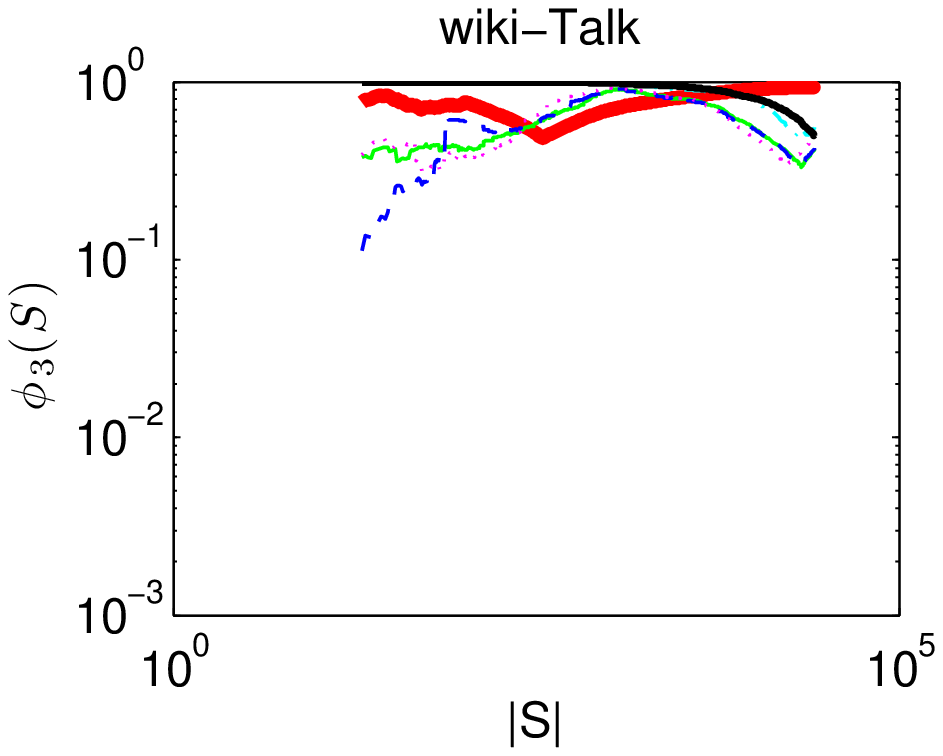}
\includegraphics[height=3.0cm]{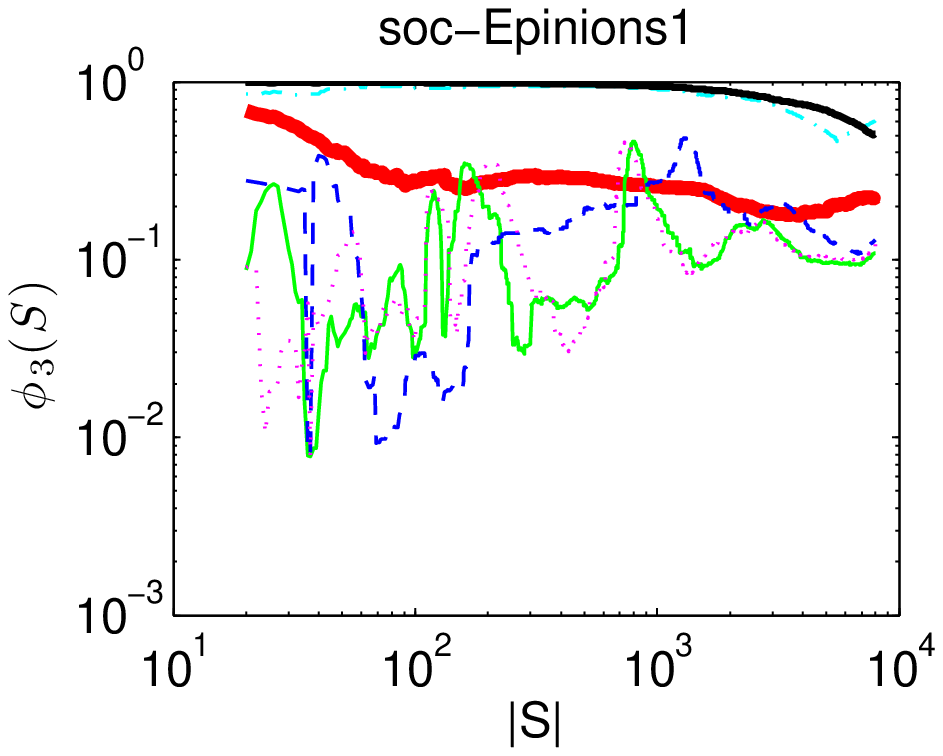}
\includegraphics[height=3.0cm]{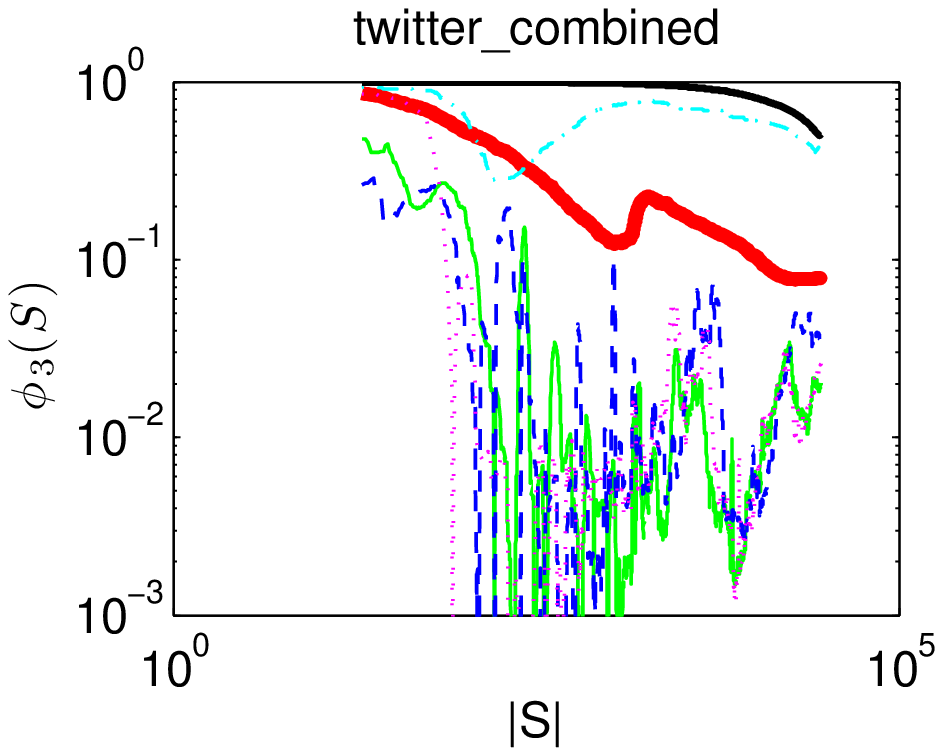} \\\vspace{-0.05cm}
\includegraphics[height=3.0cm]{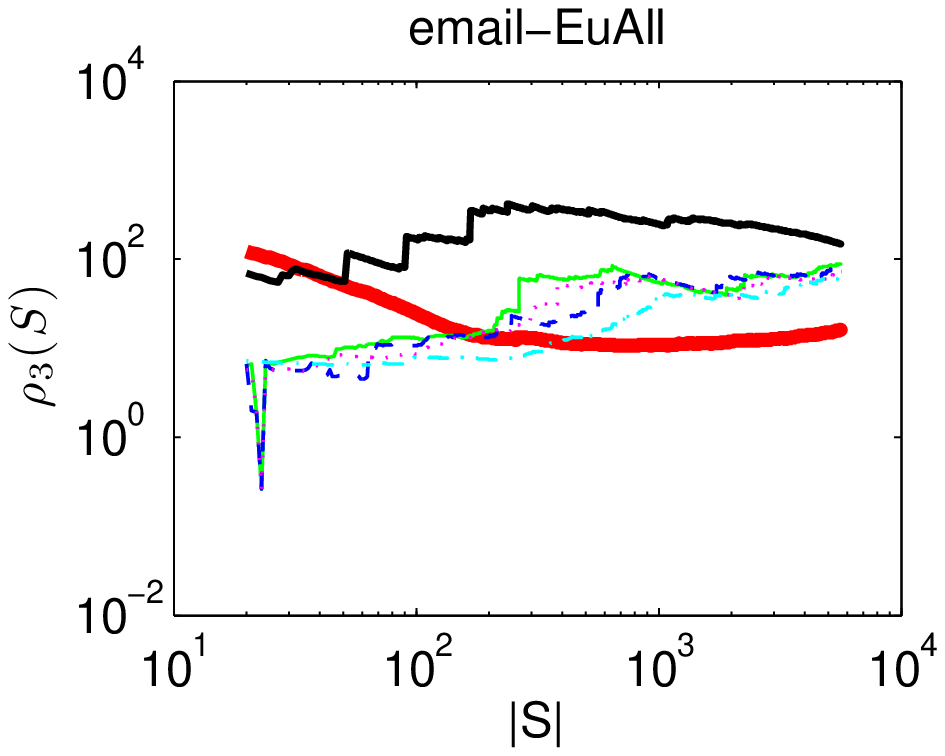}
\includegraphics[height=3.0cm]{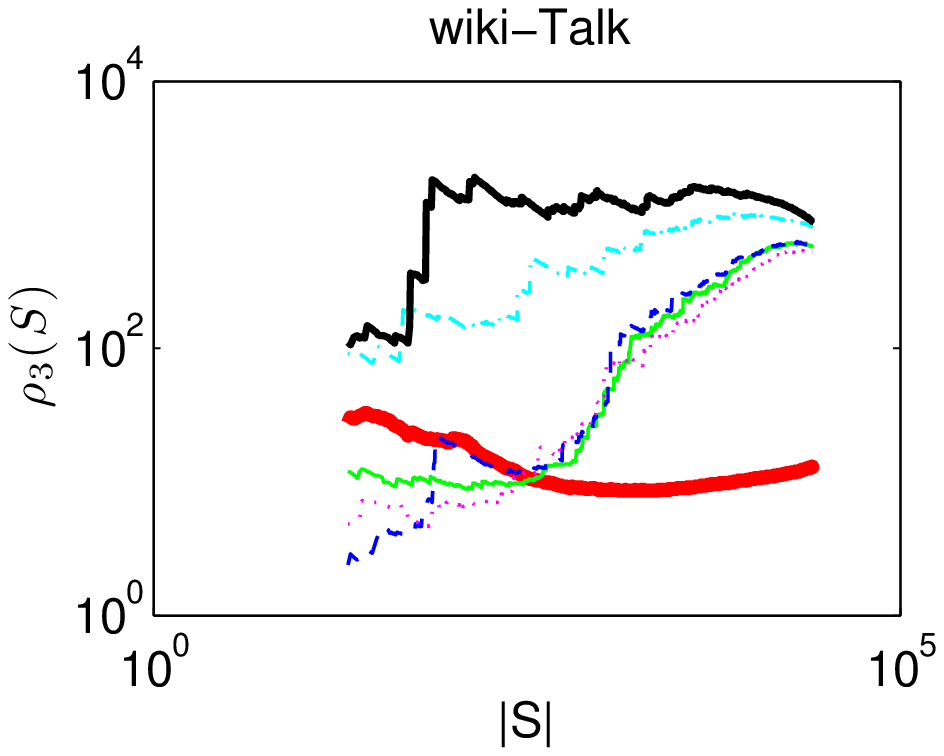}
\includegraphics[height=3.0cm]{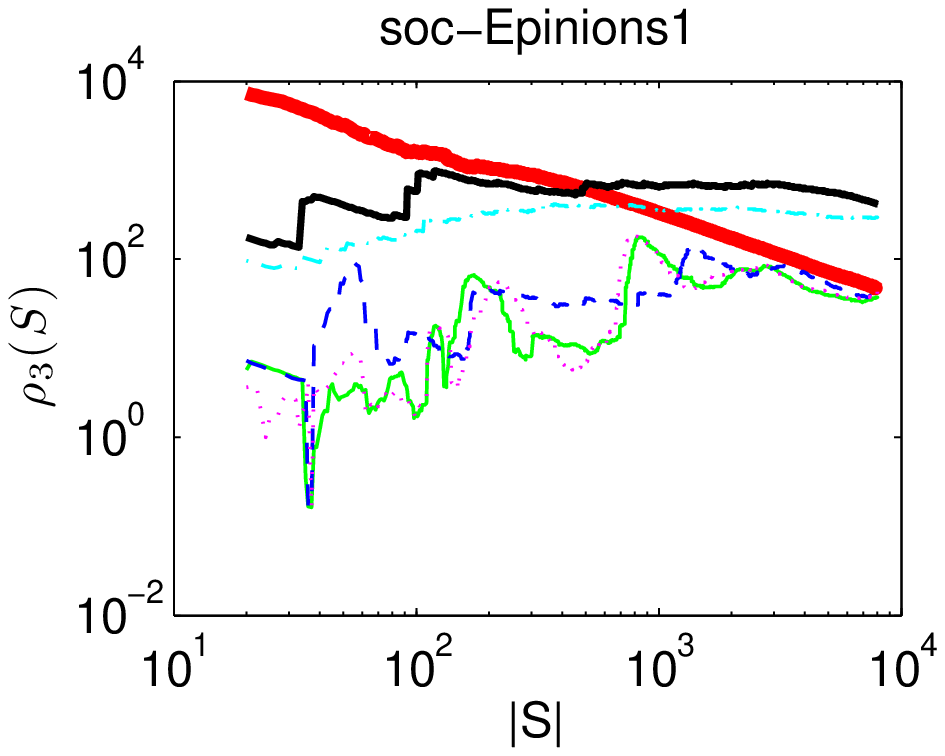}
\includegraphics[height=3.0cm]{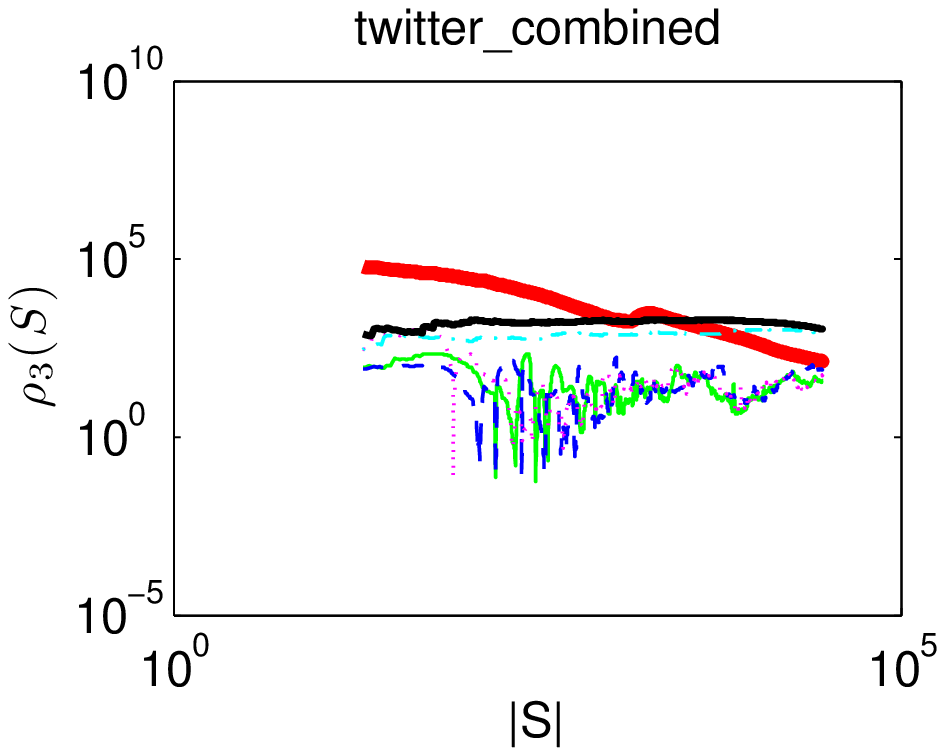} \\\vspace{-0.05cm}
\includegraphics[height=3.0cm]{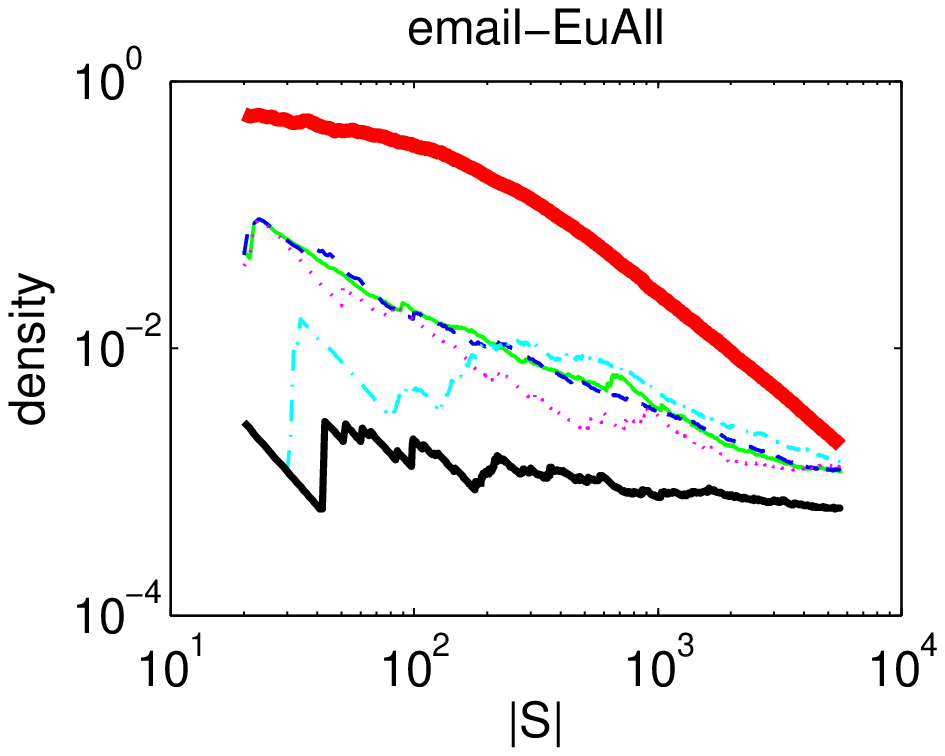}
\includegraphics[height=3.0cm]{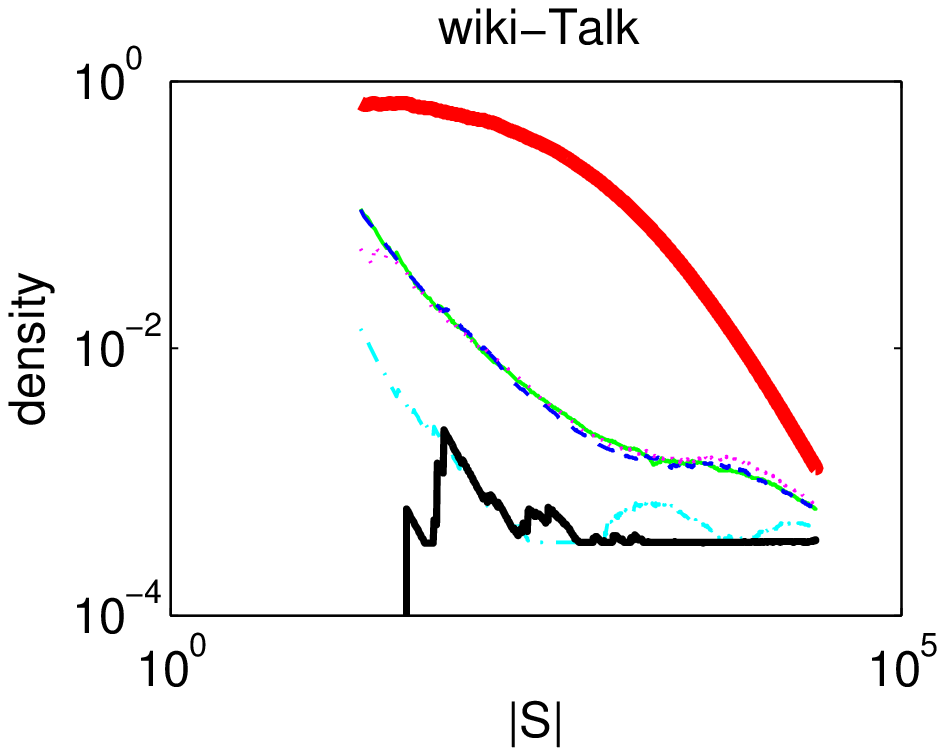}
\includegraphics[height=3.0cm]{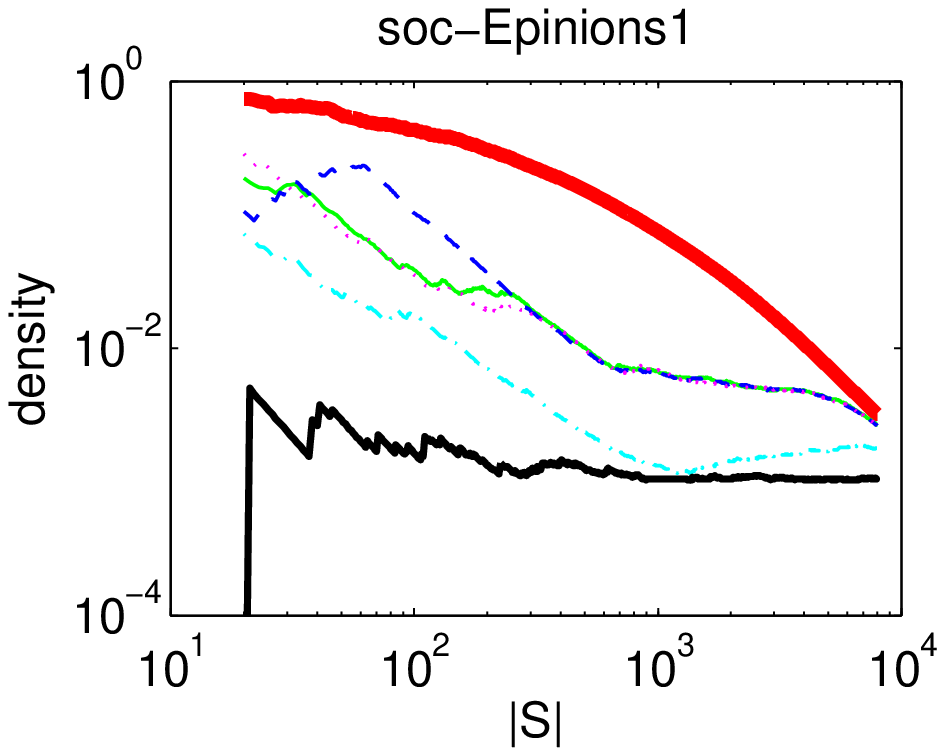}
\includegraphics[height=3.0cm]{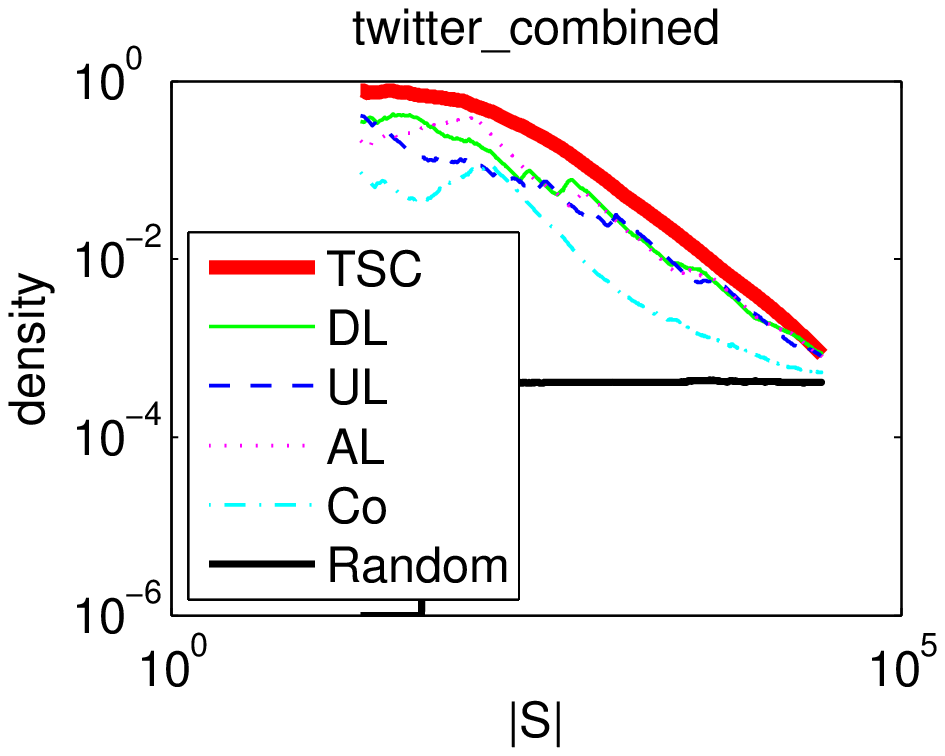}
\vspace{-0.45cm}
\caption{
(Top row)
Higher-order conductance, $\phi_3(S)$, as a 
function of the smaller vertex partition set ($|S|$).
The size of the vertex set runs from twenty to half the nodes in the network.
(Middle row)
Higher-order expansion, $\rho_3(S)$.
(Bottom row)
Density of the cluster.
For \dataset{email-EuAll} and \dataset{wiki-Talk} (left two coumns),
higher-order conductance from \tsc is on par with other spectral methods,
and higher-order expansion is better for large enough clusters.
For \dataset{soc-Epinions1} and \dataset{twitter\_combined},
the higher-order conductance and expansion is better using standard clustering algorithms.
In all cases, \tsc finds much denser clusters.
}
\label{fig:cut_results}
\vspace{-7mm}
\end{figure*}

Figure~\ref{fig:cut_results} shows the sweep profiles
on the networks in Table~\ref{tab:net_statistics}.
The results are for a single cut of the network.
The plots show the
higher-order conductance (\eqnref{eqn:multi_cond}),
higher-order expansion (\eqnref{eqn:multi_exp}),
and density of the smaller of the partitionined vertex sets.
For \dataset{email-EuAll} and \dataset{wiki-Talk},
higher-order conductance is the same for most algorithms,
but \tsc has much better higher-order expansion when the vertex set gets large enough.
On \dataset{soc-Epinions1} and \dataset{twitter\_combined},
standard spectral methods have better higher-order conductance and higher-order expansion.
Crucially, in all cases, \tsc finds much denser subgraphs.
In general, we expect communities with lots of directed 3-cycles to be dense sets.
Thus, even though \tsc sometimes does not always do well  with respect to
the score metrics discussed in Sec.~\ref{sec:034sweep},
it is still finding relevant structure.

Since our goal is to explore structural properties of the cuts,
we did not tune our TSC algorithms for high performance.
Subsequently, we do not compare running times of the algorithms.
However, we note that for each network, our straightforward implementation of TSC
ran in under 10 minutes using a laptop with 4GB of memory.

\section{Related work}
\label{sec:070related}
While the bulk of community detection algorithms are for undirected networks,
there is still an abundance of methods for directed networks \cite{malliaros2013clustering}.
There are several spectral algorithms related to partitioning directed networks.
The ones we investigated in this paper were based on
the undirected Laplacian (\ie, standard spectral clustering but ignoring edge directions),
the directed Laplacian~\cite{chung2005laplacians},
the asymmetric Laplacian~\cite{boley2011commute},
and co-clustering~\cite{dhillon2001co, rohe2012co}.
Other spectral algorithms are based on dyadic methods~\cite{li2012mutual} and optimizing directed modularity~\cite{leicht2008community}.

There is some work in community detection that explicitly targets higher-order structures.
Klymko \emph{et al}.~weight directed edges in triangles and then revert to a clustering algorithm for undirected networks~\cite{klymko2014using}.
Clique percolation builds overlapping communities by examining small cliques~\cite{derenyi2005clique}.
Optimizing the LinkRank metric can identify communities based on information flow~\cite{kim2010finding}, which is similar to our use of directed 3-cycles in Sec.~\ref{sec:051layered}. 
Multi-way relationships between nodes are also explicitly handled by hypergraph partitioners~\cite{karypis2000multilevel} .

Finally, we mention that tensor factorizations have been used by Huang \emph{et al}.~to find overlapping communities \cite{huang2013fast}.
This work uses new spectral techniques for learning latent variable models from higher-order moments \cite{anandkumar2012tensor}.

\section{Discussion}
\label{sec:080discussion}
We have provided a framework for tensor spectral clustering that opens the door to further higher-order network analysis.
The framework gives the user the flexibility to cluster structures based on his or her application.
In Sec.~\ref{sec:050applications} we provided two applications---layered flow networks and anomaly detection---that showed
how this framework can lead to better clustering of nodes based on network motifs.
For these applications, the networks were small and manually constructed.
In future work, we plan to explore these applications on large networks.

In Sec.~\ref{sec:060d3c_large}, we explored clustering based on directed 3-cycles.
In some cases, TSC provided much better cuts in terms of higher-order expansion.
Interestingly, for some networks, simply removing edges that do not participate in a directed 3-cycle 
and using a standard spectral clustering algorithm is sufficient for finding good cuts with respect to higher-order conductance and higher-order expansion.
However, in these cases, we are comparing against baselines optimized for our specific problem.
That being said, \tsc does always identify much denser clusters.
The networks we analyzed were social and internet-based, and it would be interesting to see if similar trends hold for networks derived from physical or biological systems.

For the large networks, we did not perform full directed clustering---we only investigated the sweep profiles.
The higher-level goal of this paper is to explore the ideas in higher-order clustering,
and we leave full-stack algorithms to future work.
One interesting question for such algorithms is whether we should partition based on recursive bisection (Algorithm~\ref{alg:tsc}) or k-means.
These algorithmic variations provide several opportunities for challenging future work.

\section*{Acknowledgements} 
This research has been supported in part by NSF
IIS-1016909,
CNS-1010921,
CAREER IIS-1149837,
ARO MURI,                 
DARPA XDATA,            
DARPA GRAPHS,           
Boeing,  
Facebook,
Volkswagen,
and Yahoo.
David F.~Gleich is supported by NSF CAREER CCF-1149756 and IIS-1422918.
Austin R.~Benson is supported by a Stanford Graduate Fellowship.

\clearpage
\vspace{-0.25cm}
\bibliographystyle{abbrv}
\bibliography{bibliography}


\section{Proof of Lemma~\ref{lem:scc_evec}}
\label{sec:0A0proof}

\begin{proof}
Without loss of generality, order the nodes so that the nodes in $V_1$ are first, the nodes $V_2$ are second, and the sink node $t$ is last.
Let $R_k^{(i)}$ be the transition probabilities of $R_k$, restricted to nodes in $V_i$, $i = 1, 2$.
Because nodes from two different strongly connected components cannot be in the same directed 3-cycle,
\[
\begin{bmatrix}
R_k = \mat{R}_k^{(1)} & 0 & 0 \\
0 & 0 & 0 \\
0 & \allones^{\Tra} & 1 \\
\end{bmatrix}, k \in V_1; \quad
 \mat{R}_k = \begin{bmatrix}
0 & 0 & 0 \\
0 & \mat{R}_k^{(2)} & 0 \\
\allones^{\Tra} & 0 & 1 \\
\end{bmatrix}, k \in V_2
\]
Here, the first block diagonal matrix is of size $|V_1| \times |V_1|$, the second of size $|V_2| \times |V_2|$, and the third of size $1 \times 1$.
Consider the following vectors that have the same block structure:
\[
\vect{y}^{\Tra}_1 = \begin{bmatrix} \allones^{\Tra} & 0    & 1 \end{bmatrix}, \quad
\vect{y}^{\Tra}_2 = \begin{bmatrix} 0    & \allones^{\Tra} & 1 \end{bmatrix}.
\]
Then $\vect{y}_{i}^{\Tra}\mat{R}_k = \vect{y}_i$, $i = 1, 2$ for $k \in V_1$ \emph{and} $k \in V_2$.
Thus,
\[
\vect{y}_i^{\Tra}\tenstrans = \sum_{k}\vect{y}_i^{\Tra}\left(x_k\mat{R}_k\right) = \sum_{k}x_k\vect{y}^{\Tra}_i = \vect{y}_i^{\Tra}
\]

The vector $\vect{z} = \vect{y}_1 - \frac{N_1 + 1}{N_2 + 1}y_2$ satisfies $\vect{z}^{\Tra}\tenstrans = \vect{z}^{\Tra}$ and $\vect{z}^{\Tra}\allones = 0$.
Since $\allones^{\Tra}\tenstrans = \allones^{\Tra}$, $\vect{z}$ is a second left eigenvector.
Finally, $z_i$ is positive for all $i \in V_1$ and negative for all $i \in V_2$.
\end{proof}

\clearpage
\section{Directed 3-cycle cuts on more networks}
\label{sec:0B0more_nets}
Here we present the results of Sec.~\ref{sec:060d3c_large} on more networks.
Table~\ref{tab:net_statistics_all} lists the statistics of eleven networks that we consider.
We include one undirected network, \dataset{email-Enron}.
For this data set, all undirected edges are simply replaced with two directed edges.
Figures~\ref{fig:all_d3c_cond},~\ref{fig:all_d3c_exp},~and~\ref{fig:all_density} show
the sweep profiles for higher-order conductance, higher-order expansion, and density, respectively.

\begin{table}[h]
\centering
\caption{
Statistics of networks used for computing directed 3-cycle cuts.
The statistics are taken on the largest strongly connected component of the network
after removing all edges that do not participate in any directed 3-cycle.
}
\begin{tabular}{l l l l l}
\toprule
Network & $n = |V|$ & $m = |E|$ & \# D3Cs \\ \midrule
\dataset{wiki-Vote} & 1,151 & 24,349 & 43,975 \\
\dataset{wiki-RfA} & 2,219 & 61,965 & 133,004 \\
\dataset{as-caida20071105} & 8,320 & 50,016 & 72,664 \\
\dataset{email-EuAll} & 11,315 & 80,211 & 183,836 \\
\dataset{web-Stanford} & 12,520 & 105,376 & 212,639 \\
\dataset{soc-Epinions1} & 15,963 & 262,779 & 738,231 \\
\dataset{soc-Slashdot0811} & 22,193 & 377,172 & 883,884 \\
\dataset{email-Enron} & 22,489 & 332,396 & 1,447,534 \\
\dataset{wiki-Talk} & 52,411 & 957,753 & 5,138,613 \\
\dataset{twitter\_combined} & 57,959 & 1,371,621 & 6,921,399 \\
\dataset{amazon0312} & 253,405 & 1,476,377 & 1,682,909 \\
\bottomrule
\end{tabular}
\label{tab:net_statistics_all}
\end{table}

\begin{figure*}
\includegraphics[height=3.4cm]{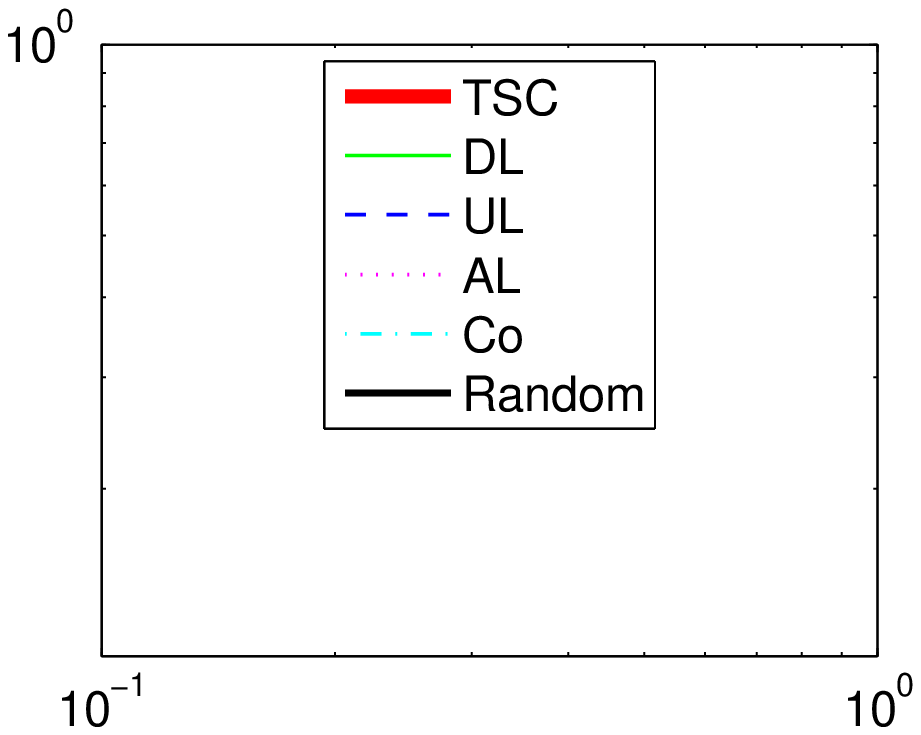}
\includegraphics[height=3.4cm]{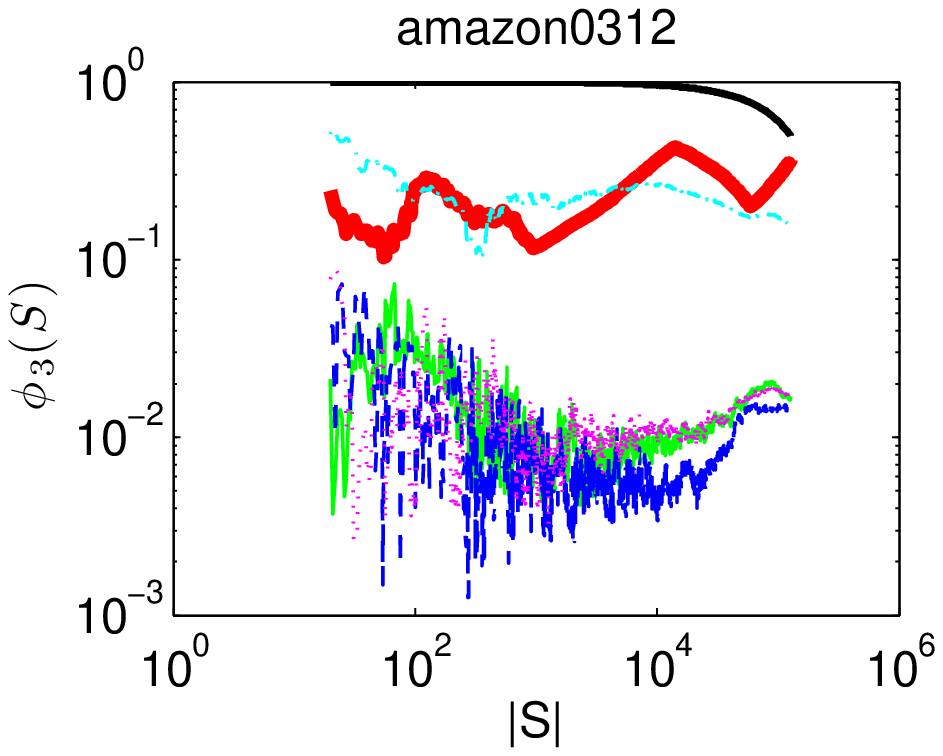}
\includegraphics[height=3.4cm]{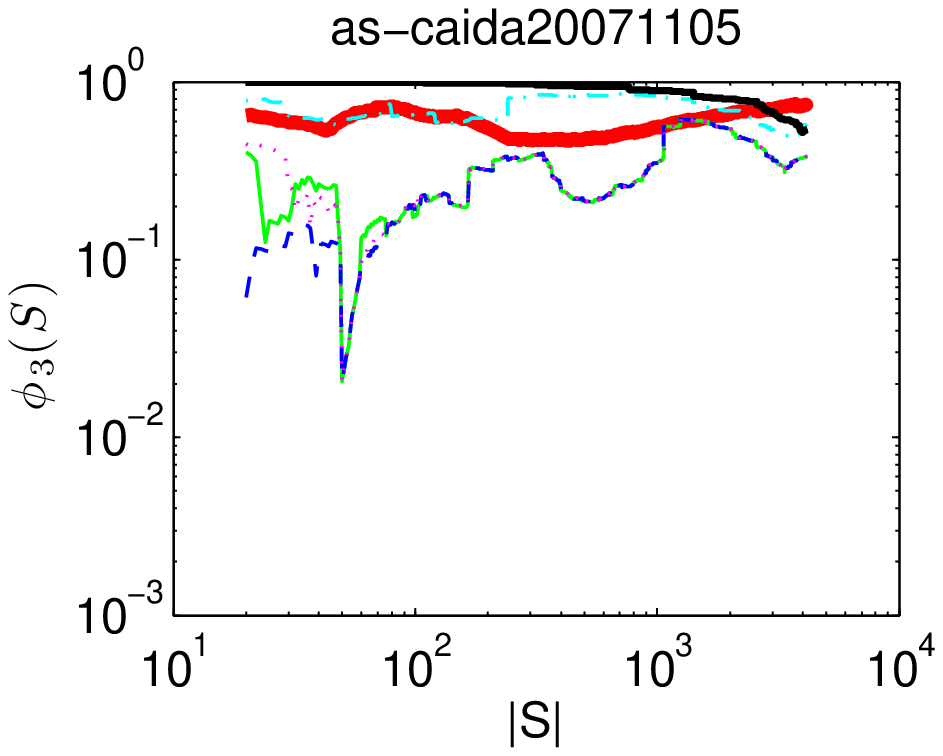}
\includegraphics[height=3.4cm]{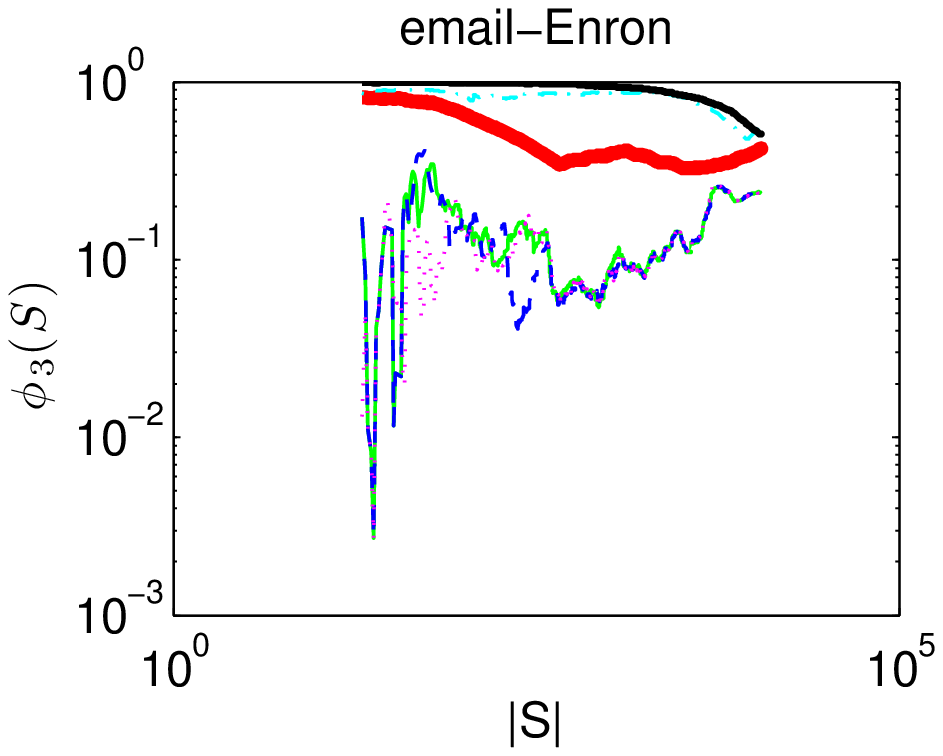} \\
\includegraphics[height=3.4cm]{email-EuAll.eps}
\includegraphics[height=3.4cm]{soc-Epinions1.eps}
\includegraphics[height=3.4cm]{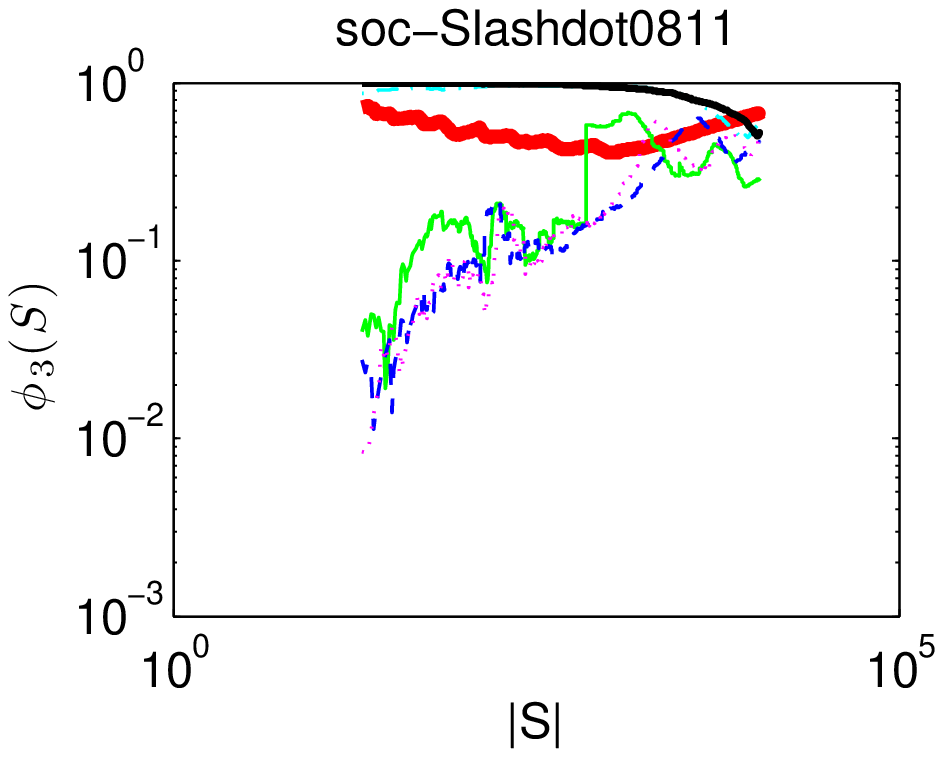}
\includegraphics[height=3.4cm]{twitter_combined.eps} \\
\includegraphics[height=3.4cm]{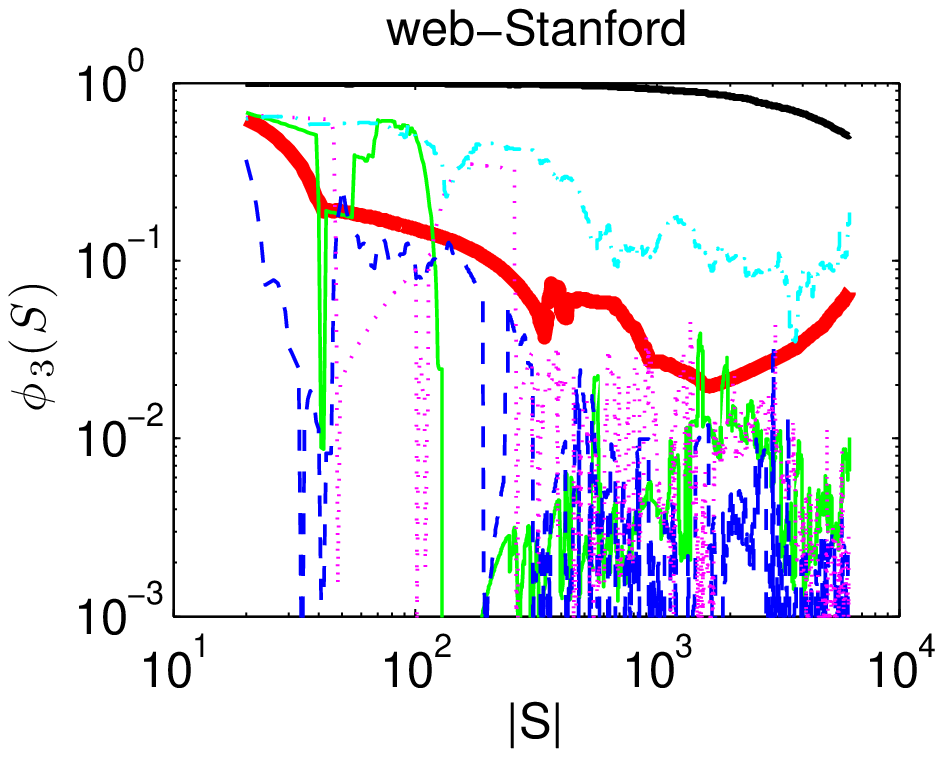} 
\includegraphics[height=3.4cm]{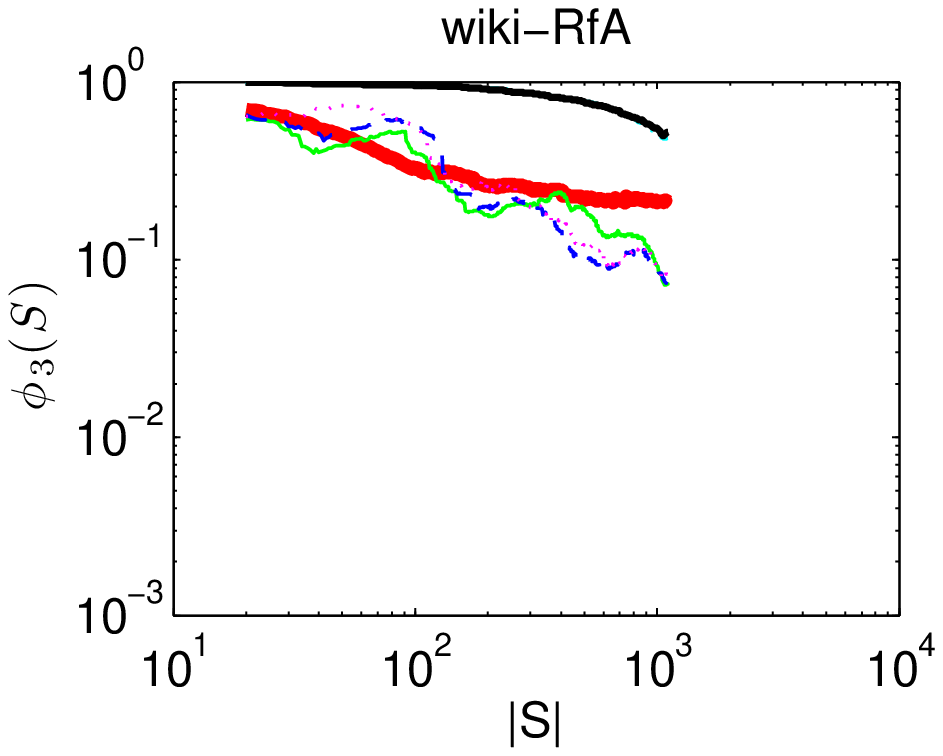}
\includegraphics[height=3.4cm]{wiki-Talk.eps}
\includegraphics[height=3.4cm]{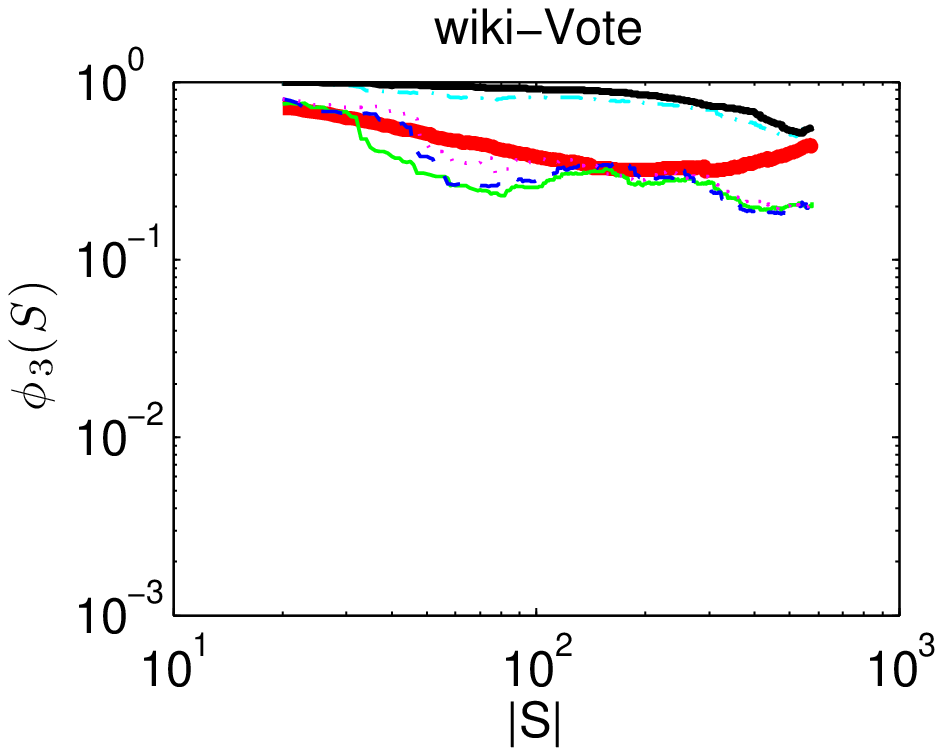}
\caption{
Directed 3-cycle higher-order conductance ($\phi_3(S)$, \eqnref{eqn:multi_cond}) as a function of the smaller partition size, $|S|$.
The size runs from twenty nodes to half the nodes in the network.
}
\label{fig:all_d3c_cond}
\end{figure*}

\begin{figure*}
\includegraphics[height=3.4cm]{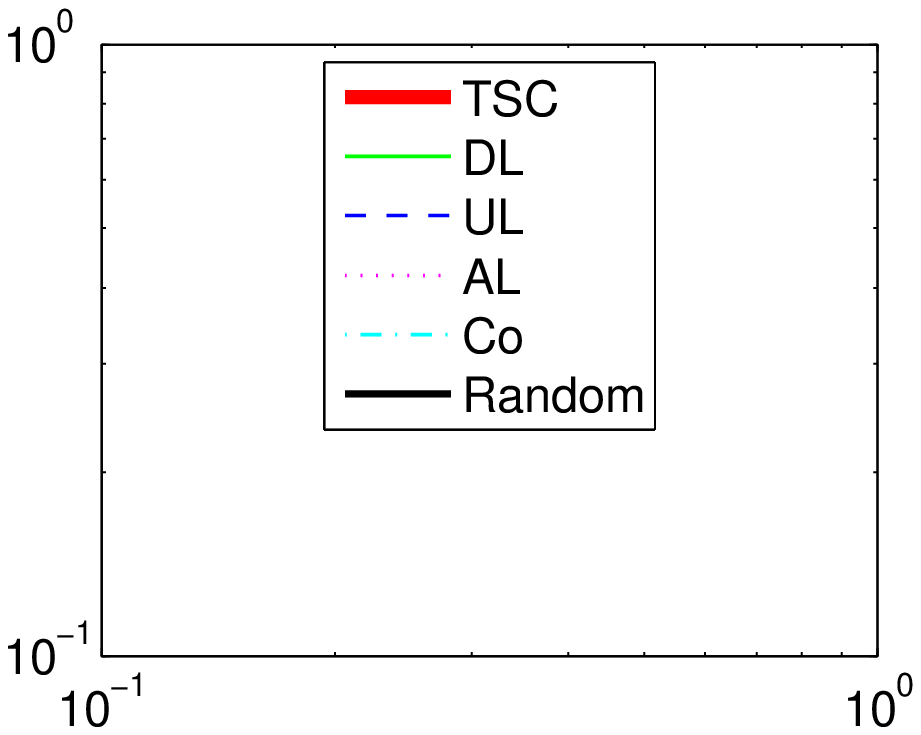}
\includegraphics[height=3.4cm]{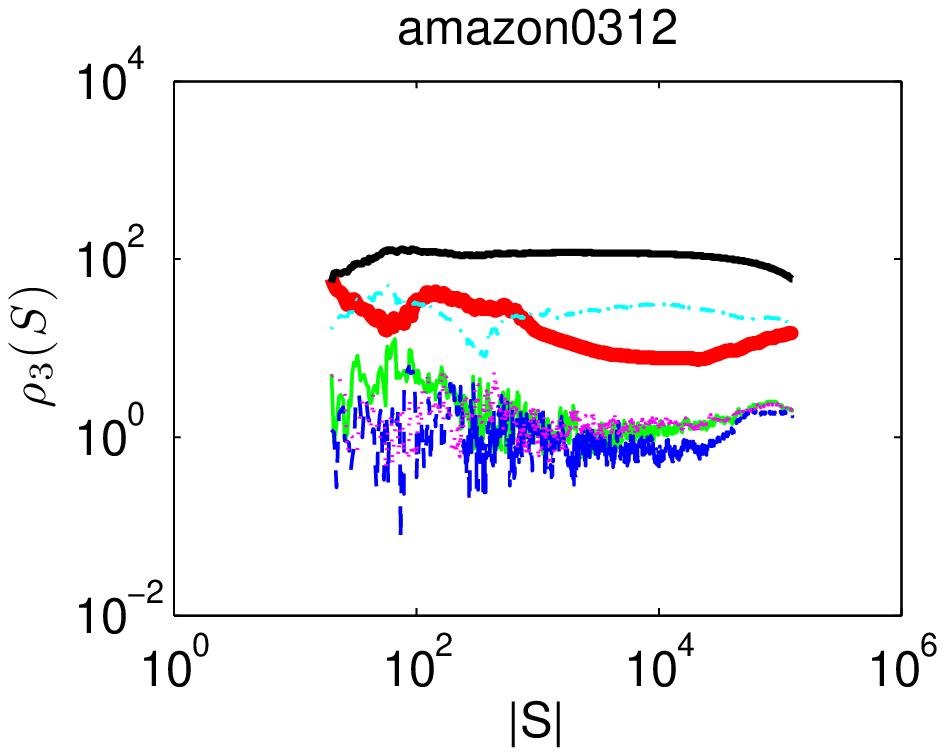}
\includegraphics[height=3.4cm]{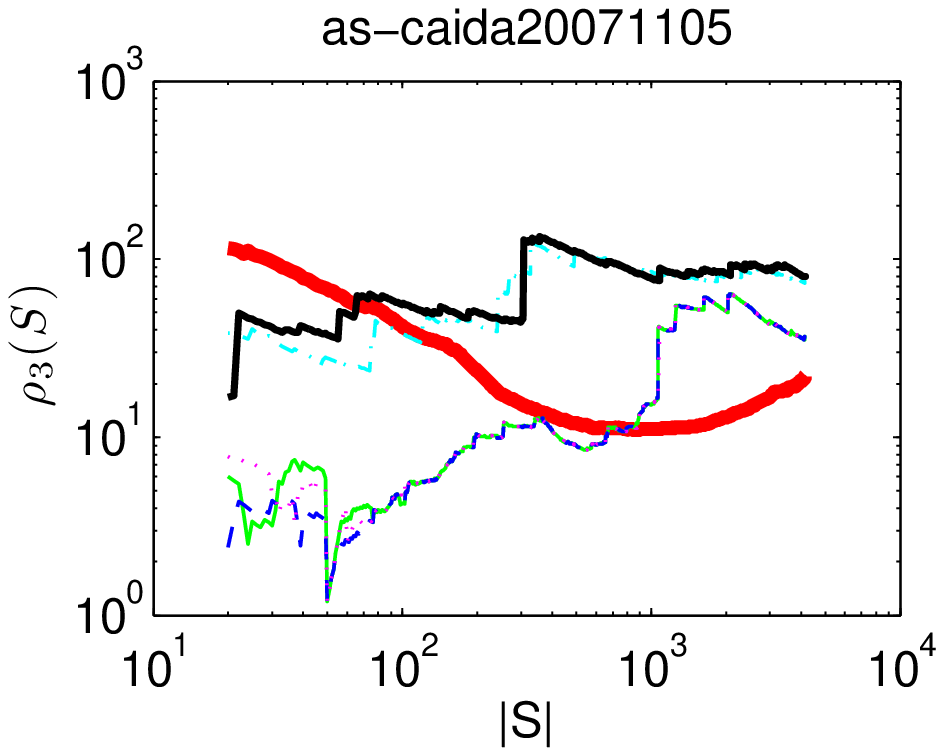}
\includegraphics[height=3.4cm]{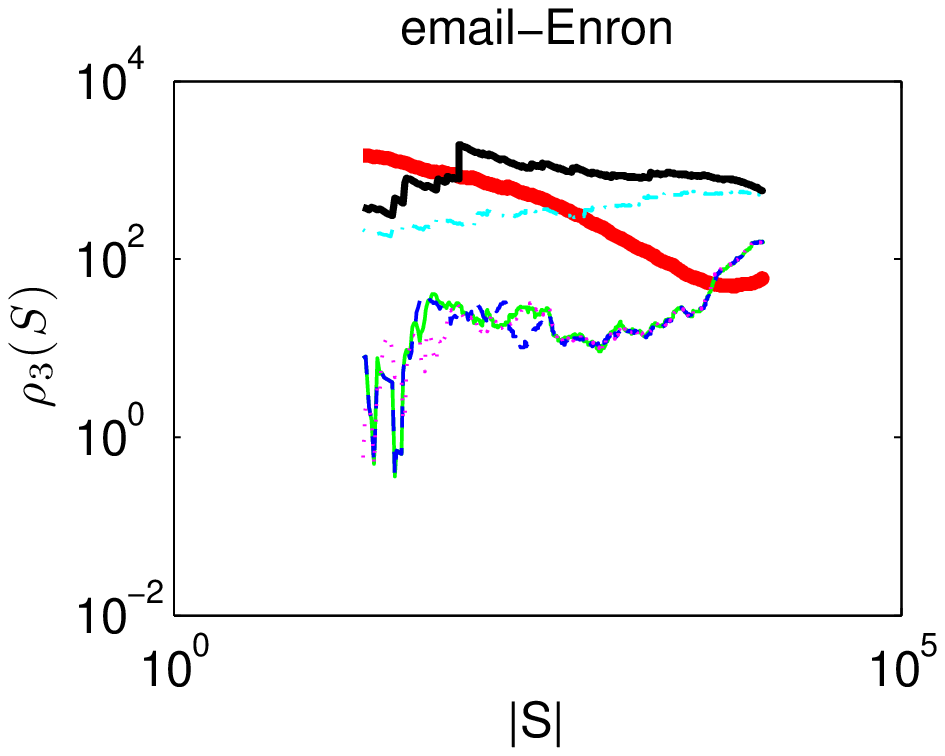} \\
\includegraphics[height=3.4cm]{email-EuAll_exp.eps}
\includegraphics[height=3.4cm]{soc-Epinions1_exp.eps}
\includegraphics[height=3.4cm]{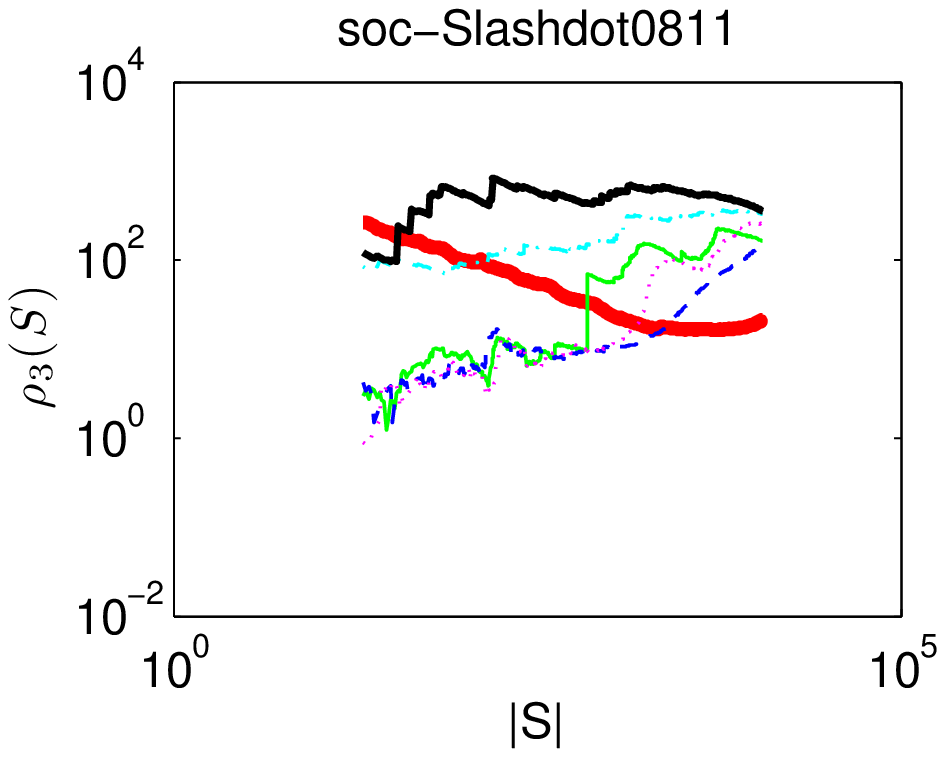}
\includegraphics[height=3.4cm]{twitter_combined_exp.eps} \\
\includegraphics[height=3.4cm]{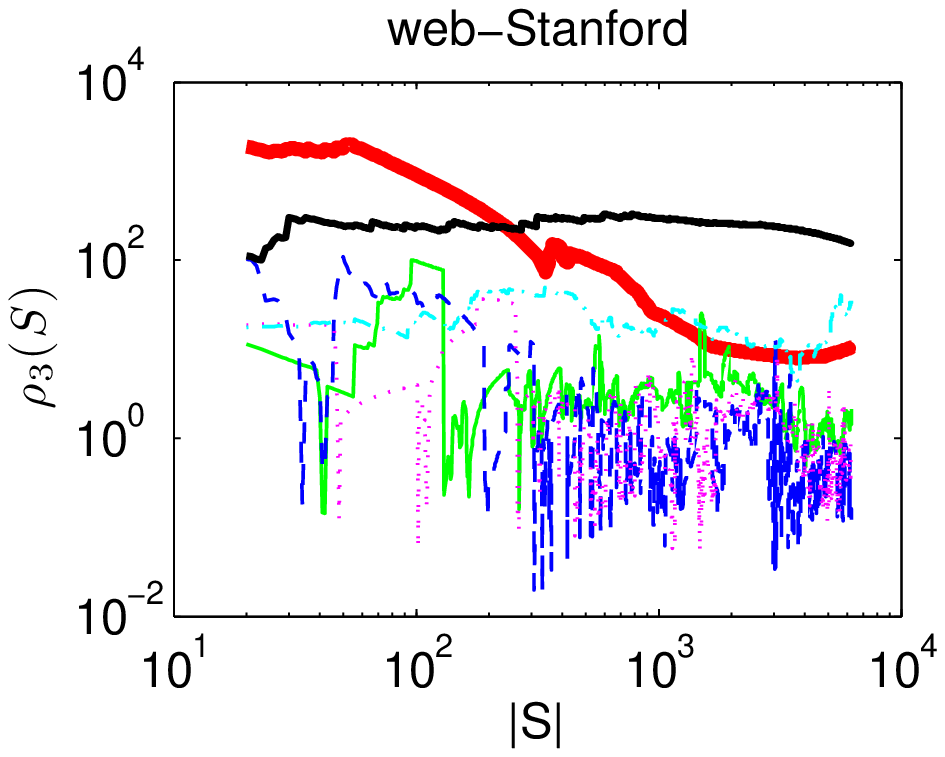} 
\includegraphics[height=3.4cm]{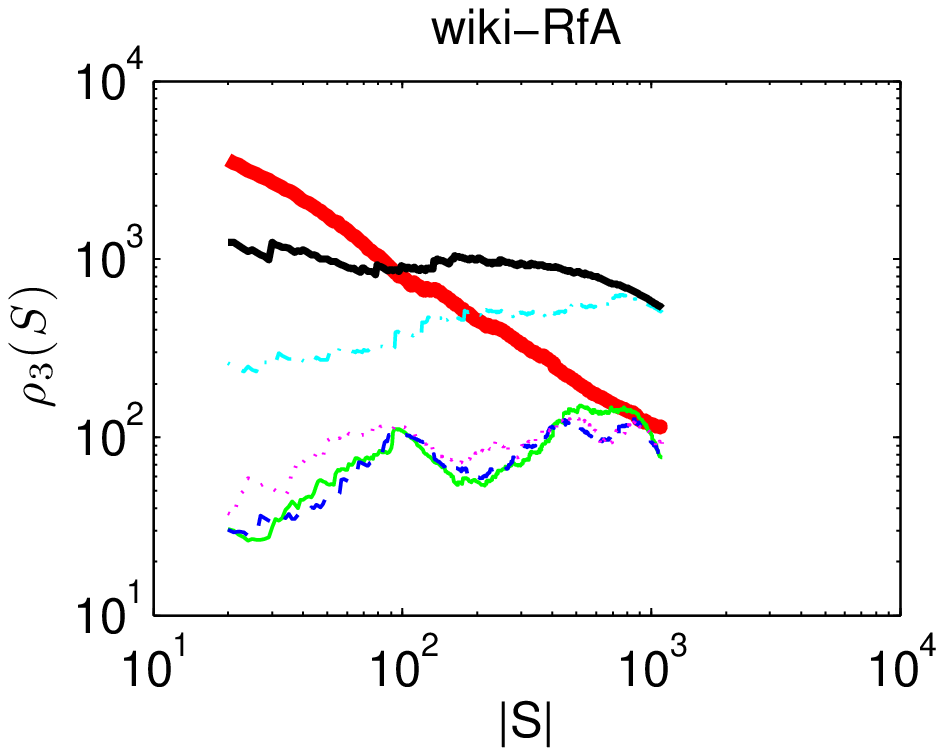}
\includegraphics[height=3.4cm]{wiki-Talk_exp.eps}
\includegraphics[height=3.4cm]{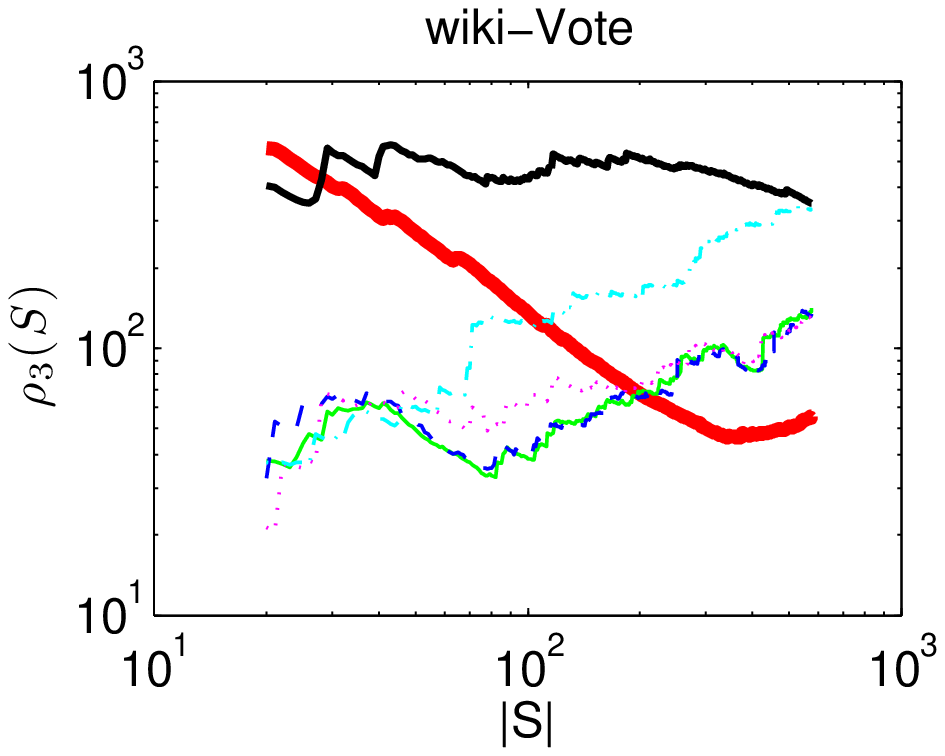}
\caption{
Directed 3-cycle higher-order expansion ($\rho_3(S)$, \eqnref{eqn:multi_cond}) as a function of the smaller partition size, $|S|$.
The size runs from twenty nodes to half the nodes in the network.
}
\label{fig:all_d3c_exp}
\end{figure*}

\begin{figure*}
\includegraphics[height=3.4cm]{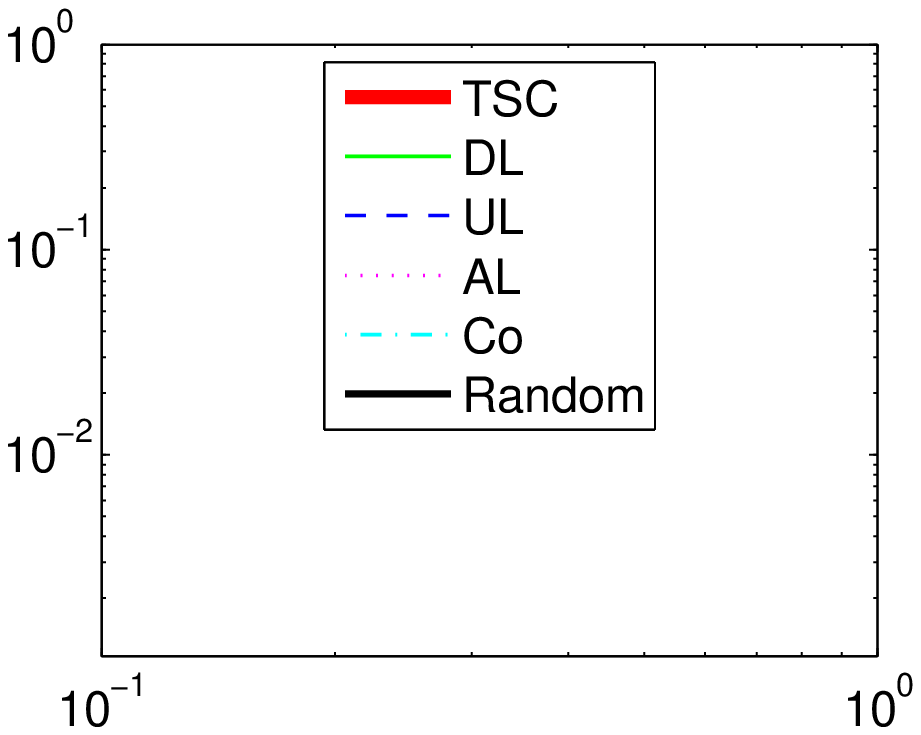}
\includegraphics[height=3.4cm]{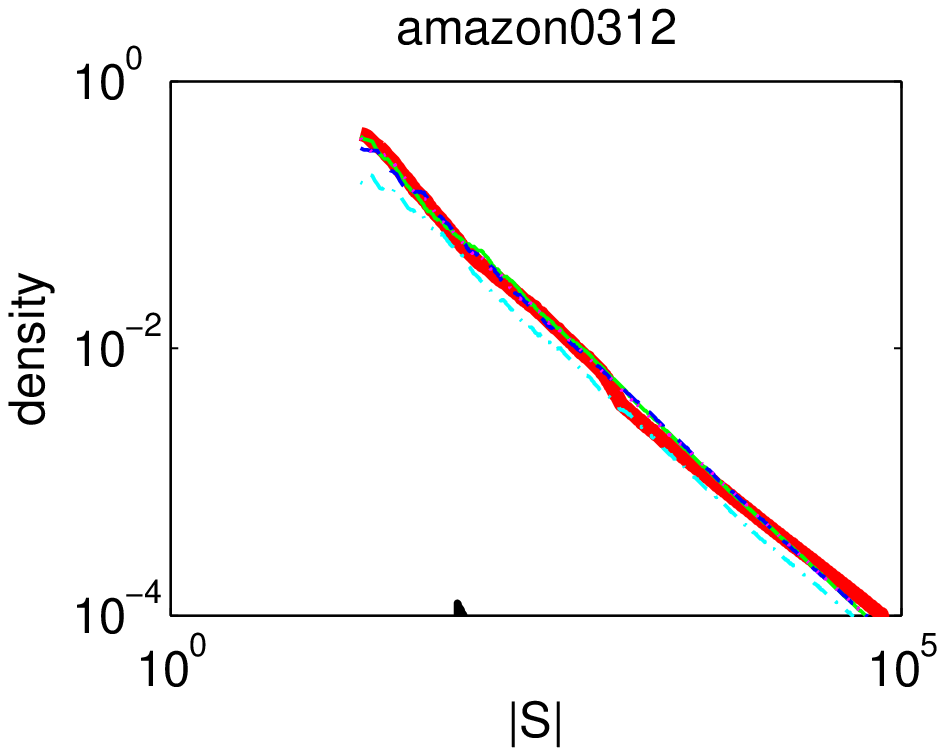}
\includegraphics[height=3.4cm]{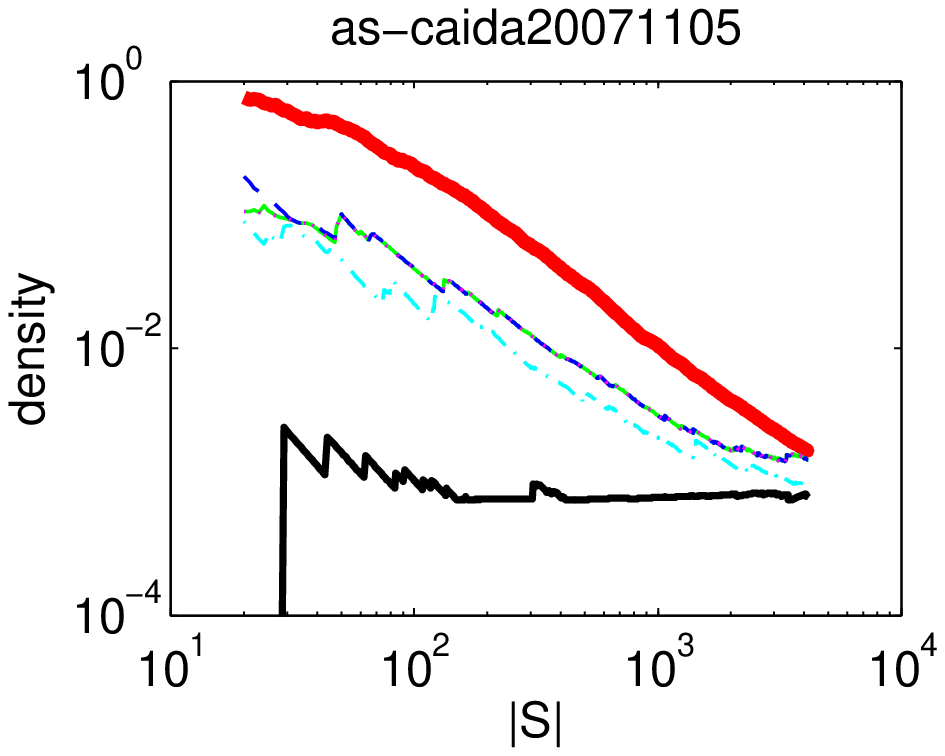}
\includegraphics[height=3.4cm]{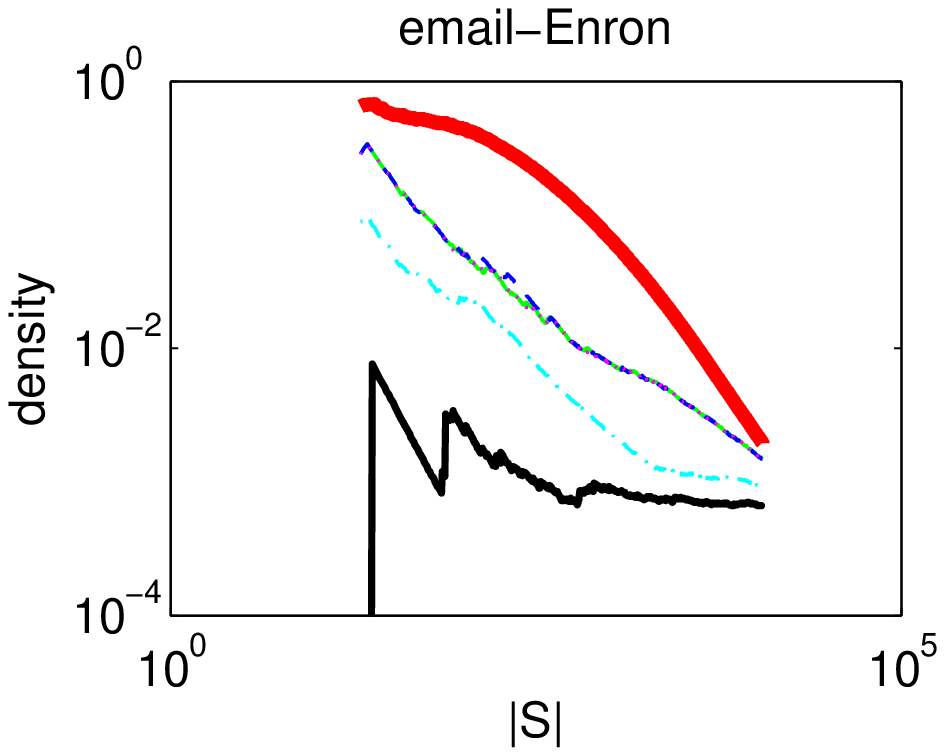} \\
\includegraphics[height=3.4cm]{email-EuAll_density.eps}
\includegraphics[height=3.4cm]{soc-Epinions1_density.eps}
\includegraphics[height=3.4cm]{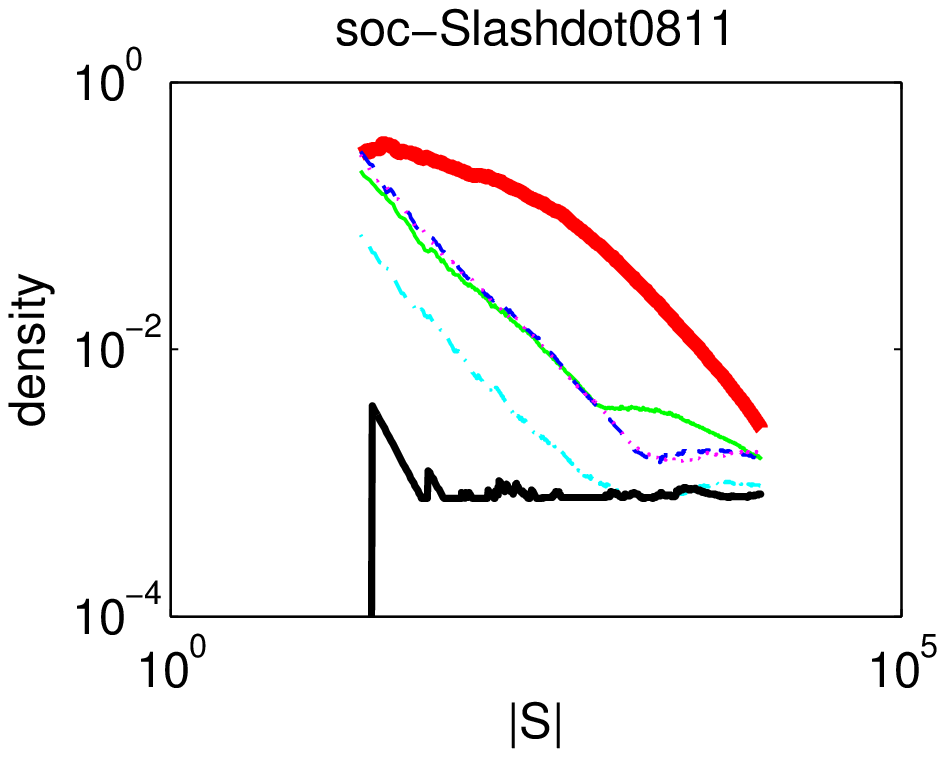}
\includegraphics[height=3.4cm]{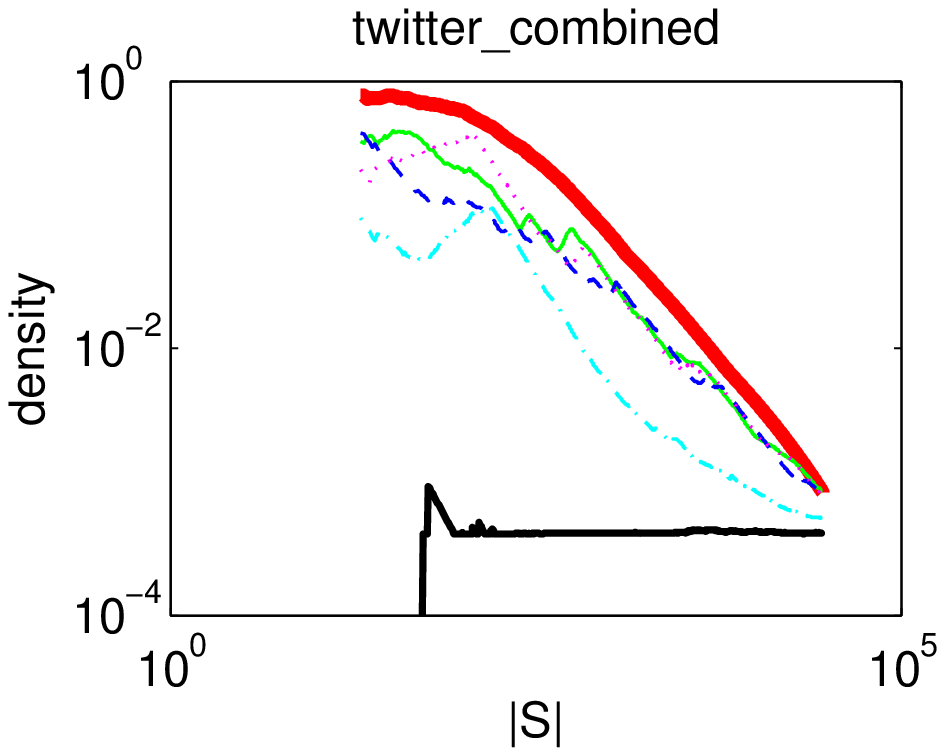} \\
\includegraphics[height=3.4cm]{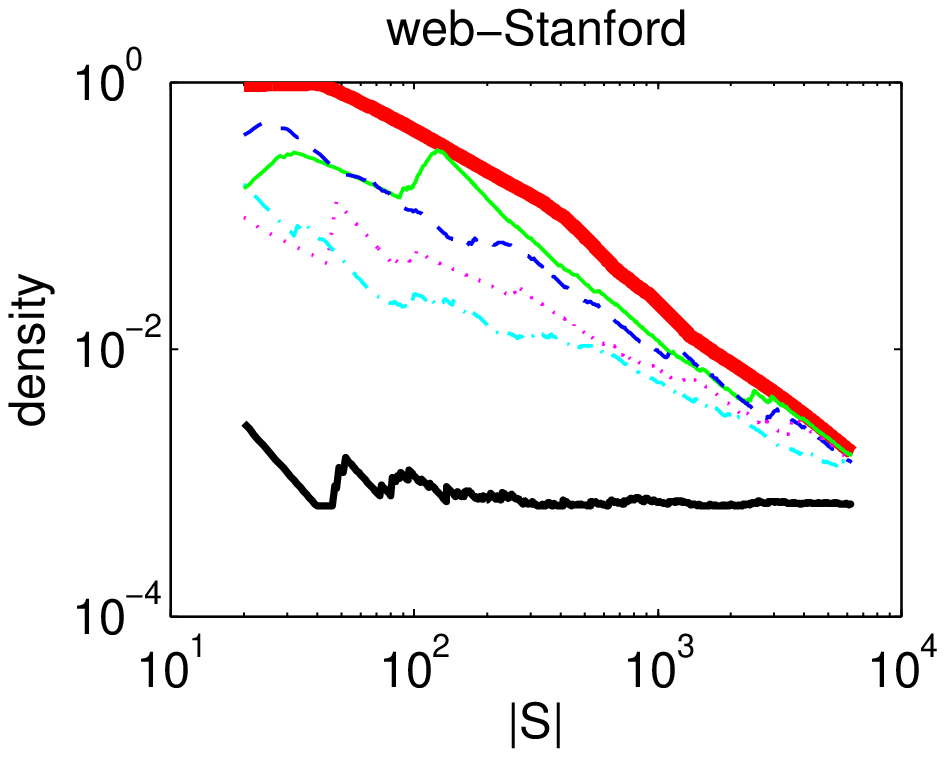} 
\includegraphics[height=3.4cm]{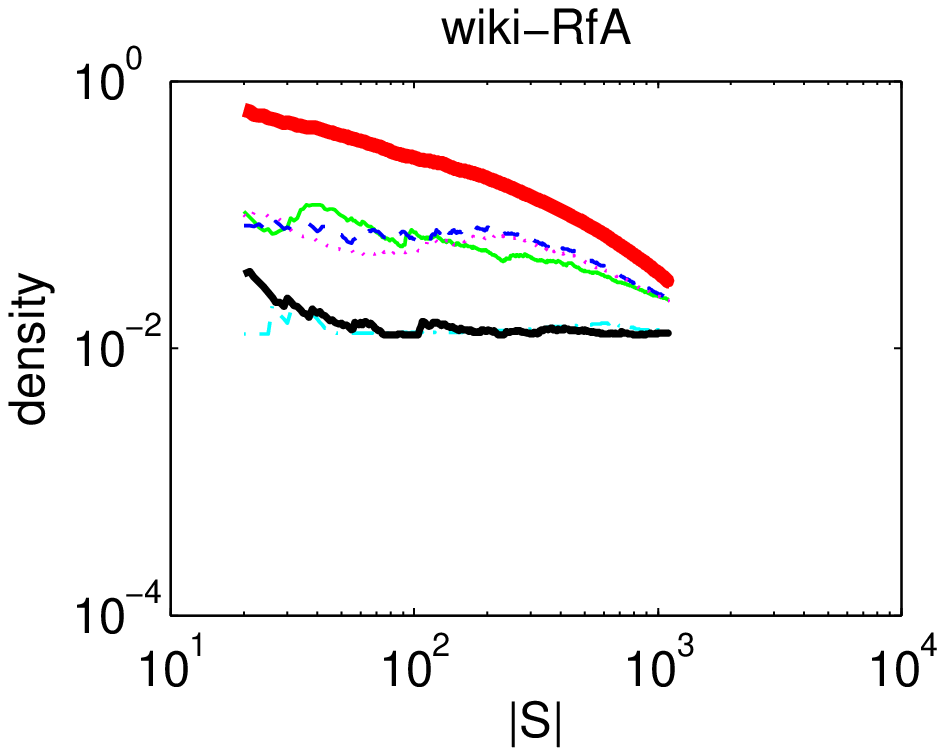}
\includegraphics[height=3.4cm]{wiki-Talk_density.eps}
\includegraphics[height=3.4cm]{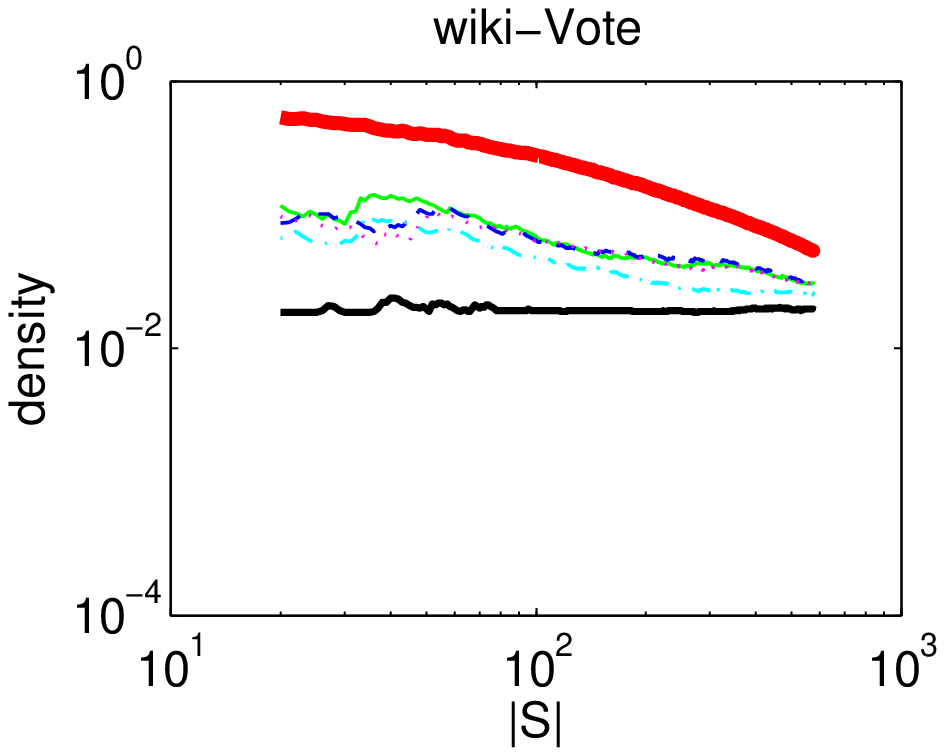}
\caption{
Density of the smaller partition set of vertices as a function of its size, $|S|$.
The size runs from twenty nodes to half the nodes in the network.
In nearly all cases, TSC finds the densest clusters.
}
\label{fig:all_density}
\end{figure*}

\end{document}